\newcommand{\N}{\mathbb{N}}
\newcommand{\R}{\mathbb{R}}
\newcommand{\Hm}[1]{\leavevmode{\marginpar{\tiny%
$\hbox to 0mm{\hspace*{-0.5mm}$\leftarrow$\hss}%
\vcenter{\vrule depth 0.1mm height 0.1mm width \the\marginparwidth}%
\hbox to
0mm{\hss$\rightarrow$\hspace*{-0.5mm}}$\\\relax\raggedright #1}}}
\newcommand{\sgn}{\mathop{\mathrm{sgn}}\nolimits}
\newtheorem{claim}{Claim}[section]
\newtheorem{theorem}[claim]{Theorem}
\newtheorem{lemma}[claim]{Lemma}
\newtheorem{proposition}[claim]{Proposition}
\theoremstyle{definition}
\newtheorem{remark}[claim]{Remark}
\newtheorem{remarks}[claim]{Remarks}
\begin{document}


\title[Gap asymptotics in a weakly bent leaky quantum wire]
{Gap asymptotics in a weakly bent leaky quantum wire}

\author{Pavel Exner}
\address{Doppler Institute for Mathematical Physics and Applied
Mathematics, \\ Czech Technical University in Prague,
B\v{r}ehov\'{a} 7, 11519 Prague, \\ and  Nuclear Physics Institute
CAS, 25068 \v{R}e\v{z} near Prague, Czechia} \ead{exner@ujf.cas.cz}

\author{Sylwia Kondej}
\address{Institute of Physics, University of Zielona G\'ora, ul.\ Szafrana
4a, 65246 Zielona G\'ora, Poland} \ead{s.kondej@if.uz.zgora.pl}

\maketitle
\begin{abstract}
\noindent The main question studied in this paper concerns the weak-coupling behavior of the geometrically induced bound states of singular Schr\"odinger operators with an attractive $\delta$ interaction supported by a planar, asymptotically straight curve $\Gamma$. We demonstrate that if $\Gamma$ is only slightly bent or weakly deformed, then there is a single eigenvalue and the gap between it and the continuum threshold is in the leading order proportional to the fourth power of the bending angle, or the deformation parameter. For comparison, we analyze the behavior of a general geometrical induced eigenvalue in the situation when one of the curve asymptotes is wiggled.
\end{abstract}

\section{Introduction}\label{introduction}

The paper is devoted to an asymptotic problem for a class of singular Schr\"{o}dinger operators, usually called leaky quantum wires, or more generally graphs. They are used to model electron motion in thin wires or networks made of semiconductor or other materials. In contrast to the conventional quantum graph models \cite{BK} they employ a confinement mechanism which does not neglect the quantum tunneling. At the same time, these operators pose various new and interesting questions in the spectral geometry. They have been an object of intense interest for more than a decade --- for a survey of results up to 2008 see, e.g., the review paper \cite{Ex08} --- but there are still numerous open problems in this area.

To describe the subject of the present paper, let us first characterize the operators in question which play the role of Hamiltonians in such systems. They can be formally written as
\begin{equation}\label{eq-formal}
-\Delta -\alpha \delta_\Gamma\,,
\end{equation}
where $\Delta $ is the Laplace operator acting in $L^2(\R^n)$ and $\delta_\Gamma $ stands for the Dirac-type potential supported by a manifold $\Gamma$ of a lower dimensionality; we will be concerned with the particular situation when $n=2$ and $\Gamma$ is a curve in the plane. The above formal expression is sometimes written also as $-\Delta -\alpha \delta(x-\Gamma)$ or $-\Delta -\alpha (\delta_\Gamma,\cdot)\delta_\Gamma$; a proper mathematical definition of the operator, which we will denote\footnote{In most papers dealing with this subject the symbol $H_{\alpha,\Gamma}$ is used, however, the coupling constant $\alpha$ is fixed here and for the sake of simplicity we drop it.} as $H_\Gamma$, will be given below. Note also that as long as the curve is smooth one can define $H_\Gamma$ alternatively through boundary conditions describing a jump of the normal derivative across $\Gamma$; this gives it an illustrative meaning of a $\delta$-interaction perpendicular to the curve \cite{AGHH}.

Here we are going to deal with the situation when $\Gamma$ is a single infinite curve. If it is a straight line, $\Gamma=\Sigma$, the spectrum is easily found by separation of variables: it is absolutely continuous and
$$
\sigma(H_{\Sigma}) = \left[-\textstyle{\frac14}{\alpha^2}, \infty \right)\,.
$$
Once the curve becomes geometrically nontrivial, the spectrum changes. In particular, if $\Gamma$ is asymptotically straight at both end in an appropriate sense, then the essential spectrum remain preserved, $\sigma_\mathrm{ess}(H_{\Gamma}) = \left[-\frac14{\alpha^2}, \infty \right)$, however, one or more isolated eigenvalues appear below its threshold \cite{EI}. Relations between between properties of this discrete spectrum and the geometry of the curve are of a great interest.

As an example, consider a `broken' line $\Gamma_\beta$ which consists of two halflines meeting at a `vertex' and forming the exterior angle $\beta \in [0, \pi]$, in other words, for $\beta =0$ the curve $\Gamma_0$ coincides with the straight line $\Sigma $. There are various recent results on the discrete spectrum dependence on the angle $\beta$, in particular, asymptotic bounds on the number of eigenvalues for $\beta \to\pi-$ \cite{DR} or a lower bound on the principal eigenvalue \cite{Lo}. Our problem here concerns the situation when $\beta$ is small. Then there is a single eigenvalue and we ask how the gap between it and the essential spectrum threshold behaves as $\beta\to 0+$. We shall not restrict at that to this simple example and consider a class of locally bent (or deformed) curves, straight outside a compact region, which become straight when the corresponding parameter analogous to the angle $\beta$ approaches zero.

This is one of the open problems formulated in the  review \cite{Ex08} where the conjecture was made that the gap is proportional to the fourth power of the angle parameter in the leading order. This guess comes from an analogy with quantum waveguides. The Dirichlet Laplacian in a bent infinite planar strip of a fixed width $d$ has also geometrically induced eigenvalues below the threshold of its essential spectrum, equal to $\left( \frac{\pi}{d} \right)^2$ in this case, and the gap is proportional to the fourth power of the bending angle \cite{DE}. However, analogies are often treacherous guides in these situations, and the present one might not work due to several reasons. First of all, the type of confinement is rather
different in the two situations, the present one being much `softer'. Secondly, one does not use the same method to deal with the weak bending asymptotics. In the waveguide case the eigenvalue existence is proved variationally and the asymptotics uses the Birman-Schwinger method adapted from the standard Schr\"odinger operator theory. Here the (generalized) Birman-Schwinger trick was  used already to prove the existence \cite{EI} and the only natural way to proceed is to analyze finer properties of the BS operator. Nevertheless, our main goal in this paper is to demonstrate that the mentioned conjecture was correct, namely that we have the asymptotic relation
$$
\lambda (H_{\Gamma_\beta}) =  -\frac14{\alpha^2} +a \beta^4 + o(\beta^4)
$$
as $\beta\to 0+$  with a coefficient $a<0$ the explicit form of which will be given in Theorem~\ref{th-main} below. A similar relation is proved for more general weakly bent curves.

Passing from the example to a more general class, a caveat is needed: it is vital we consider geometric perturbations (bending, local deformation) which disappear in the limit and $\Gamma$ becomes a straight line. To underscore this fact we analyze also the situation in which the unperturbed curve is a line locally deformed in a compact region, denoted as $\overline{\Gamma }$. The operator $H_{\overline{\Gamma}}$ has then an eigenvalue, or eigenvalues below $-\frac14{\alpha^2}$. If we now wiggle one of the halfline `ends' introducing a nonzero angle $\varphi$ between it an the other one, the eigenvalue dependence on the angle is linear in the leading correction term, 
$$
\lambda (H_{\overline{\Gamma}_\varphi}) = \lambda
(H_{\overline{\Gamma}}) + b \varphi + o(\varphi)
$$
with the coefficient $b$ given explicitly in Theorem~\ref{th-main2}. The result extends to the situation when the unperturbed eigenvalue is degenerate and we present the corresponding formula even if it might be void: one can conjecture that the spectrum of the operators $H_\Gamma$ considered here is simple.

\section{Preliminaries and the results}
\setcounter{equation}{0}

\subsection{Geometry of the interaction support}

Let $\Gamma $ be a continuous and piecewise $C^2$ infinite planar curve without self-intersections which we parametrize naturally by its arc length.
More precisely, we suppose that $\Gamma $ is the graph of a piecewise $C^2$ function $\gamma:\, \R \to \R^2 $ where the argument $s$ of $\gamma (\cdot )$ determines the length of arc, that is, $|\dot\gamma(s)|=1$. With a later purpose in mind we also  introduce a special symbol $\Sigma $ for a
straight line. Moreover, we assume that

\begin{description}

\item[$\mathbf{H_1}$] there exists a $c\in (0, 1)$ such that\footnote{In the following we use the generic symbol $c,c',\dots$ for the various positive constants}
\begin{equation}\label{eq-Cbound}
|\gamma (s)-\gamma (s')|\geq c|s-s'| \quad \mathrm{for}\,\,\mathrm{any} \,\, s,s' \in \R\,,
\end{equation}

\item[$\mathbf{H_2}$] there are real numbers $s_1 >s_2 $ and straight lines $\Sigma_i,\: i=1,2,$ such that $\Gamma $ coincides with $\Sigma_1 $ for the parameter values $s\geq s_1$ and with $\Sigma_2 $ for $s\leq  s_2$,

\item[$\mathbf{H_3}$] one-side limits of $\dot\gamma$ exists at the points where the function $\ddot\gamma$ is discontinuous.

\end{description}

\noindent The first assumption excludes, in particular, the existence of cusps, self-intersections as well as what one could call `near self-intersections', that is, it guarantees the existence of strip neighborhood of $\Gamma$ that does not intersect itself. The second one says that $\Gamma $ is straight outside a compact region. We also introduce a special symbol $\overline{\Gamma} $ for curves for which the two asymptotes mentioned above are parts of the same line, $\Sigma_1 =\Sigma_2$.

The third assumption means that the signed curvature\footnote{The subscripts in this formula refer to the Cartesian coordinates, not to be mixed with the deformation parameter $\beta$ of $\gamma_\beta$ used in the following.} $k(s)=\dot\gamma_2(s)\ddot\gamma_1(s) - \dot\gamma_1(s)\ddot\gamma_2(s)$, where the dot conventionally denotes the derivative with respect to $s$, is piecewise continuous and the one-sided limits of $\dot\gamma$, that is of the tangent vector to the curve, at the points of discontinuity exist. We denote the set of these points as $\Pi = \{p_i \}_{i=1}^{\sharp \Pi}$ and shall speak of them as of vertices, having in mind the interpretation of $\Gamma$ as a chain graph. Consequently, $\Gamma $ consists of  $\sharp \Pi +1$ simple arcs or edges (by assumption their number is finite, and two of them are infinite), each of which has as its endpoints one or two of the vertices.

We note that the curvature integral describes \emph{bending} of the curve. More specifically, the bending of $\Gamma $ between two points $\gamma
(s)$ and $\gamma (s')$ away of the vertices, understood as the angle between the tangents at those points, equals
\begin{equation}\label{eq-totcurvature}
  \phi (s,s') = \sum _{p_i\in(s,s')} c(p_i)
  +\int_{(s,s')\setminus\Pi} k(s)\,\mathrm{d}s\,,
\end{equation}
where $c(p_i)\in (0,\pi)$ is the exterior angle formed by the two adjacent edges of $\Gamma $ which meet at $p_i$, in other words, $\cos c(p_i) = t_{i,+} \cdot  t_{i,-} $, where $t_{i,\pm} = \lim _{s\to s_i\pm} \dot\gamma(s)$ are the unit tangent vectors at the endpoints of the two edges
which meet at $p_i$. Alternatively, we can understand $\phi (s,s')$ as the integral over the interval $(s,s')$ of $\tilde{k}:\: \tilde{k}(s) = k(s) + \sum_{p\in\Pi} c(p)\,\delta(s-p)$. By assumption the functions $k,\,\tilde{k}$ are compactly supported, thus $\phi(s,s')$ has the same value for all $s<s_2 <s_1<s'$ which we shall call the \emph{total bending}. It is easy to check that the curvature, in the generalized sense of $\tilde{k}$, allows us to reconstruct the curve uniquely up to Euclidean transformations using the formul{\ae}
\begin{equation}\label{eq-reprgamma}
  \gamma (s) = \left( \int_{0}^s \cos \phi (u, 0 )\,\mathrm{d}u
  \,,\int_{0}^s \sin \phi (u, 0 )\,\mathrm{d}u \right)\,,
\end{equation}
identifying without loss of generality the point $\gamma(0)$ with the origin of coordinates in the plane and $\dot\gamma(0)$ with the vector $(1,0)$.

\subsection{Definition of the Hamiltonian}

We define our singular Schr\"odinger operator as the form sum of the free and the interaction parts. Note that the trace map theorem implies that the embedding $W^{1,2}(\R^2 ) \hookrightarrow L^2(\Gamma )$ is continuous, and consider the following form,
$$
\mathrm{h}_\Gamma  [f]= \int_{\R^2}|\nabla f(x)|^2\,\mathrm{d}x -\alpha
\int_{\Gamma }|f(s)|^2\, \mathrm{d}s\,, \quad f\in W^{1,2}(\R^2)\,,
$$
where the function in the second term on the right-hand-side is understood in the sense of the mentioned trace map. The form is closed and bounded from below, hence it is by the second representation theorem associated with a unique self-adjoint operator which we denote by $H_\Gamma $, giving thus a rigorous meaning to the formal expression (\ref{eq-formal}). As was mentioned in the introduction (and elsewhere), it can be alternatively defined through the boundary conditions imposed at the curve giving an illustrative meaning of an attractive $\delta$ potential of the strength $\alpha$ supported by the curve $\Gamma$ to the interaction.

Since the curve $\Gamma $ belongs under the assumptions we made into the class analyzed in Ref.~\cite{EI}, we know that
\begin{equation} \label{spess}
\sigma_{\mathrm{ess}} (H_\Gamma )=\sigma_{\mathrm{ess}} (H_\Sigma)
=\left[-\textstyle{\frac14}{\alpha^2}, \infty \right)\,,
\end{equation}
and that $H_\Gamma $ has at least one discrete eigenvalue whenever $\Gamma \neq \Sigma$.

\subsection{Main result}

As indicated in the introduction, our main result concerns the gap behavior in the situation when the geometric perturbation is weak. First we have to make more precise the meaning of the asymptotics. We start from a fixed curve $\Gamma$ satisfying the assumptions ${\bf H_1}$--${\bf H_3}$, to which a bending function corresponds  to (\ref{eq-totcurvature}), and consider the one-parameter family of `scaled' curves $\Gamma_\beta $ described by the functions
\begin{equation}\label{eq-reprgamma}
  \gamma _\beta (s) = \left(\int_{0}^s \cos \beta \phi (u, 0)\,\mathrm{d}u
  \,,\int_{0}^s \sin \beta \phi (u, 0))\,\mathrm{d}u \right)\,, \quad |\beta|\in(0,1]\,.
\end{equation}
Note that the limit $\beta\to 0+$ which we are interested in may have a different meaning depending of the value of the total bending of $\Gamma$. If the latter is nonzero, the curves $\Gamma_\beta$ are `straightening' as $\beta\to 0+$. Our considerations include also the situation when the total bending is zero and $\Gamma$ is a local deformation of the straight line, then the limit can be regarded rather a `flattening' of the deformation. In both cases, of course, we arrive at the straight line for $\beta=0$.

To formulate the main result we need  an additional notion and an auxiliary statement. We define an integral operator $A:\, L^2 (\R)\to L^2 (\R)$ through its kernel,
\begin{equation}\label{eq-defA}
 \mathcal{A}(s,s'):= \frac{\alpha^4}{32\pi} K_0'\left( \frac{\alpha }{2}|s-s'|
\right)\left( |s-s'|^{-1} \left( \int_{s'}^s  \phi \right)^2 -
\int_{s'}^s \phi^2 \right)\,,
\end{equation}
where we use the abbreviation $\int_{s'}^s  \phi = \int_{s'}^s \phi ( u, s' )\mathrm{d}u$, etc., and $K_0' (\cdot)$ stands for the
derivative of the MacDonald function $K_0 (\cdot)$; alternatively we can express the kernel without the derivative using the relation $K'_0(u) = -K_1(u)$, cf.~\cite{AS, GR}. In Sec.~\ref{ss:prooflemma} below we shall show that the kernel (\ref{eq-defA}) is doubly integrable:

\begin{lemma} \label{le-traceclassA}
Under the stated assumptions,
\begin{equation}\label{eq-Abound}
\int_{\R\times\R}\mathcal{A}(s,s')\,\mathrm{d}s \,\mathrm{d}s '< \infty\,.
\end{equation}
\end{lemma}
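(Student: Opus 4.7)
The plan is to rewrite the bracket in \eqref{eq-defA} in a more geometric form, exploit the compact support of the generalized curvature to localize where the kernel is nonzero, and then control the rest by the known exponential decay of $K_0'$.

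First I would introduce $g(u):=\phi(u,0)$, so that by additivity $\phi(u,s')=g(u)-g(s')$. A direct computation then gives
$$
-B(s,s'):=\int_{s'}^{s}\phi(u,s')^{2}\,\mathrm{d}u-\frac{1}{s-s'}\left(\int_{s'}^{s}\phi(u,s')\,\mathrm{d}u\right)^{\!2}=\int_{s'}^{s}\bigl(g(u)-\bar g_{s,s'}\bigr)^{2}\,\mathrm{d}u,
$$
where $\bar g_{s,s'}$ is the mean of $g$ on the interval $[s',s]$. Thus the bracket in \eqref{eq-defA} is minus the $L^{2}$-variance of $g$ on $[s',s]$. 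In particular it is nonpositive, and since $K_0'<0$ one has $\mathcal{A}(s,s')\geq 0$, so the integral \eqref{eq-Abound} is well-defined as an element of $[0,\infty]$.

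Next I would invoke hypothesis ${\bf H}_{2}$: the generalized curvature $\tilde k$ is compactly supported on some interval $[a,b]$, hence $g$ is constant on $(-\infty,a]$ and on $[b,\infty)$. Consequently $\mathcal{A}(s,s')$ vanishes as soon as $[s',s]$ lies inside one of the two flat halflines, and it suffices to bound the integral over three regions: the bounded square $R_0=[a-1,b+1]^{2}$, the two ``mixed'' strips in which exactly one of the variables is far from $[a,b]$, and the straddling quadrants with $s>b$, $s'<a$ (and its mirror image). On $R_0$, a Taylor expansion of $\phi(\cdot,s')$ around $s'$ yields $-B(s,s')=O(|s-s'|^{3})$, which combined with $|K_0'(\alpha|s-s'|/2)|=O(|s-s'|^{-1})$ near the diagonal gives a locally integrable kernel, and $R_0$ being bounded settles that piece.

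The main work is in the unbounded pieces. Setting $L_{1}:=\max(0,a-s')$, $L_{3}:=\max(0,s-b)$ and $L:=|s-s'|$, I would explicitly split $\int_{s'}^{s}g$ and $\int_{s'}^{s}g^{2}$ into the contributions from the flat halflines and from $[a,b]$, and show by direct subtraction that the terms linear in $L_{1}$ and $L_{3}$ cancel, leaving
$$
-B(s,s')\leq C\bigl(\min(L_{1},L_{3})+1\bigr),
$$
with $C$ depending only on $\|g\|_{\infty}$ and on $b-a$. Combined with the bound $|K_0'(\alpha L/2)|\leq C'\,e^{-\alpha L/4}/\sqrt{L}$ for $L\geq 1$, the change of variables $t_{\pm}:=L_{1,3}$ in the straddling quadrant reduces the integral to
$$
\int_{0}^{\infty}\!\int_{0}^{\infty}\bigl(\min(t_{+},t_{-})+1\bigr)\frac{e^{-\alpha(t_{+}+t_{-})/4}}{\sqrt{t_{+}+t_{-}+b-a}}\,\mathrm{d}t_{+}\,\mathrm{d}t_{-},
$$
which is finite; the mixed strips are treated identically with one of the $t$-variables replaced by an integration over a compact interval.

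The hard part, as indicated, is getting the correct order for $-B$ in the straddling regime: the naive Cauchy--Schwarz-type estimate $-B\leq L\,\|g\|_{\infty}^{2}$ is linear in $L$ and, together with only the one-dimensional exponential decay of $K_0'$, is not integrable over a two-dimensional unbounded region. The cancellation of the linear-in-$L_{1,3}$ parts in $\int g$ and $\int g^{2}$ is what replaces this by the bound $\min(L_{1},L_{3})+O(1)$ and thereby produces simultaneous decay in both $t_{+}$ and $t_{-}$.
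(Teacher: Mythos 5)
Your proof reaches the correct conclusion and is structurally the same as the paper's: boundedness of the kernel near the diagonal from $K_0'(x)\sim x^{-1}$ against a bracket that vanishes at least linearly in $|s-s'|$; vanishing of $\mathcal{A}$ whenever $[s',s]$ misses the support of the (generalized) curvature; and exponential decay on what remains. Your variance identity is a genuinely nice extra: it gives the sign of $\mathcal{A}$ for free (the paper states this only as a separate remark) and explains the factor $|ss'|/|s-s'|$ in the broken-line example. The sharper straddling-quadrant bound $-B\le C(\min(L_1,L_3)+1)$ that you derive from it is correct.

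Two of your quantitative claims are off, though neither breaks the argument. First, the near-diagonal estimate $-B(s,s')=O(|s-s'|^3)$ fails for curves with vertices: at a corner $p_i$ the function $g$ jumps by $c(p_i)$, so the variance over a short interval straddling $p_i$ is of order $|s-s'|$, not $|s-s'|^3$ — compare the broken-line kernel, where the bracket $|ss'|/|s-s'|$ is exactly linear in $|s-s'|$ on the antidiagonal. The correct general bound is the paper's $0\le -B\le c'|s-s'|$, which still yields a bounded kernel near the diagonal, so your conclusion on $R_0$ survives. Second, and more importantly, the stated motivation for your refinement is wrong: the crude bound $-B\le \|g\|_\infty^2 L$ \emph{is} sufficient in the straddling quadrants. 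There one has $L=|s-s'|=L_1+L_3+(b-a)$, so $e^{-\alpha L/2}$ factorizes as a constant times $e^{-\alpha L_1/2}e^{-\alpha L_3/2}$ and decays in both variables simultaneously; multiplied by $L^{1/2}$ (the linear bound on $-B$ times the $L^{-1/2}$ from $K_0'$) this is a convergent two-dimensional integral. This is precisely the paper's estimate $|\mathcal{A}(s,s')|\le c\,e^{-\alpha(|s|+|s'|)/2}(|s|+|s'|)$. The real crux — which you do identify — is that compact support of the curvature confines the nonvanishing part of the kernel to a bounded square, two strips, and the straddling quadrants where $|s-s'|$ controls $|s|+|s'|$; once that is in place, no cancellation beyond the linear bound is needed.
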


\noindent Now we are in position to state our main result.

\begin{theorem}\label{th-main}
There is a $\beta_0 >0$ such that for any $\beta \in (-\beta_0,0) \cup (0,\beta_0 )$ the operator $H_{\Gamma_\beta }$ has a unique eigenvalue
$\lambda (H_{\Gamma_\beta })$ which admits the asymptotic expansion
\begin{equation}\label{eq-asymbeta}
\lambda (H_{\Gamma_\beta })= -\frac{\alpha^2}{4}
-\left(\int_{\R\times \R
}\mathcal{A}(s,s')\,\mathrm{d}s\,\mathrm{d}s' \right)^2 \beta^4
+o(\beta^4)\,.
\end{equation}
\end{theorem}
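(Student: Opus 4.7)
The plan is to apply the Birman--Schwinger principle: for $\kappa>0$, $\lambda=-\kappa^2$ is a discrete eigenvalue of $H_{\Gamma_\beta}$ if and only if $1$ is an eigenvalue of the integral operator $R^\kappa_{\Gamma_\beta}$ on $L^2(\R)$ whose kernel is $\frac{\alpha}{2\pi}K_0(\kappa|\gamma_\beta(s)-\gamma_\beta(s')|)$. We seek the unique critical $\kappa(\beta)>\alpha/2$ realizing this condition and then recover $\lambda=-\kappa(\beta)^2$. The first step is to Taylor-expand the chord length: inserting $\cos(\beta\phi)\approx 1-\beta^2\phi^2/2$ and $\sin(\beta\phi)\approx\beta\phi$ in the parametrization of $\gamma_\beta$, squaring and summing yields
\[
|\gamma_\beta(s)-\gamma_\beta(s')|^2 = |s-s'|^2 - \beta^2\Bigl[\,|s-s'|\!\int_{s'}^s\!\phi^2 - \Bigl(\int_{s'}^s\!\phi\Bigr)^{\!2}\,\Bigr] + \Oo(\beta^4)\,.
\]
A Taylor expansion of $K_0$ around the argument $(\alpha/2)|s-s'|$ combined with $K_0'=-K_1$ and a comparison with (\ref{eq-defA}) then gives
\[
R^\kappa_{\Gamma_\beta} = R^\kappa_\Sigma + \frac{4\beta^2}{\alpha^2}\,\widetilde{\mathcal A} + \Oo(\beta^4)
\]
uniformly for $\kappa$ in a neighbourhood of $\alpha/2$, where $\widetilde{\mathcal A}$ is the integral operator with kernel $\mathcal A$; it is trace class by Lemma~\ref{le-traceclassA}, and $\widetilde{\mathcal A}\ge 0$ because $K_0'<0$ and $|s-s'|^{-1}(\int\phi)^2-\int\phi^2\le 0$ by Cauchy--Schwarz.

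Next I would analyse the reference operator $R^\kappa_\Sigma$ via Fourier transform on $\Sigma\simeq\R$: it acts as convolution with the multiplier $\alpha/(2\sqrt{\xi^2+\kappa^2})$, which equals $1$ precisely at $(\xi,\kappa)=(0,\alpha/2)$, reflecting the absence of an $L^2$ ground state at the essential-spectrum threshold. Writing $\kappa=\alpha/2+\eta$ with $\eta>0$ small, the expansion
\[
1 - \frac{\alpha}{2\sqrt{\xi^2+\kappa^2}} = \frac{2}{\alpha^2}\bigl(\xi^2+\alpha\eta\bigr) + \Oo(\xi^4+\eta^2)
\]
identifies the low-momentum part of $I-R^\kappa_\Sigma$ with $(2/\alpha^2)(-\partial_s^2+\alpha\eta)$. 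The Birman--Schwinger equation $(I-R^\kappa_\Sigma)\phi = (R^\kappa_{\Gamma_\beta}-R^\kappa_\Sigma)\phi$ therefore reduces, at leading order on the relevant slowly-decaying scale, to the effective one-dimensional eigenvalue problem
\[
\bigl(-\partial_s^2 - 2\beta^2 \widetilde{\mathcal A}\bigr)\phi = -\alpha\eta\,\phi\,,
\]
that is, to a Schr\"odinger operator with a small non-local attractive perturbation in the weak-coupling regime.

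The final step is to apply the Simon-type one-dimensional weak-coupling asymptotic: for a small attractive perturbation $V$ of $-\partial_s^2$ the unique bound state $E=-\mu^2$ satisfies $\mu = -\frac{1}{2}\langle 1,V 1\rangle + (\text{higher order})$. Taking here $V=-2\beta^2 \widetilde{\mathcal A}$ and noting that the formal matrix element $\langle 1,\widetilde{\mathcal A}1\rangle$ equals $\int_{\R\times\R}\mathcal A(s,s')\,\D s\,\D s'$, which is finite by Lemma~\ref{le-traceclassA}, we obtain $\sqrt{\alpha\eta} = \beta^2\int_{\R\times\R}\mathcal A + o(\beta^2)$; squaring gives $\alpha\eta = \beta^4(\int_{\R\times\R}\mathcal A)^2 + o(\beta^4)$, whence
\[
\lambda(H_{\Gamma_\beta}) = -\Bigl(\frac{\alpha}{2}+\eta\Bigr)^{\!2} = -\frac{\alpha^2}{4} - \beta^4\Bigl(\int_{\R\times\R}\mathcal A(s,s')\,\D s\,\D s'\Bigr)^{\!2} + o(\beta^4)\,.
\]

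The principal obstacle is the singular nature of the unperturbed problem at the threshold: the formal eigenvector of $R^{\alpha/2}_\Sigma$ at eigenvalue $1$ is the constant function, which is not in $L^2$, so the usual finite-rank Feshbach reduction is unavailable. Making the passage from the full Birman--Schwinger equation to the effective one-dimensional Schr\"odinger problem rigorous --- and controlling both the high-momentum block of $(I-R^\kappa_\Sigma)^{-1}$ and the $\Oo(\beta^4)$ remainder in the kernel expansion, which must not be amplified by the large factor $\eta^{-1/2}=\Oo(\beta^{-2})$ inherent in the low-momentum resolvent --- requires careful estimates exploiting the rapid off-diagonal decay of $\mathcal A$ inherited from the exponential decay of $K_0'$ away from the curvature support. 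Uniqueness of the eigenvalue for small $\beta$ then follows from strict monotonicity of the top eigenvalue of $R^\kappa_{\Gamma_\beta}$ in $\kappa$ combined with a spectral-gap bound ensuring $\|R^\kappa_{\Gamma_\beta}\|<1$ for $\kappa$ away from $\kappa(\beta)$.
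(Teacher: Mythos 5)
Your outline follows essentially the same route as the paper --- Birman--Schwinger reduction relative to the straight line, second-order expansion of the chord length in $\beta$ (your formula for $|\gamma_\beta(s)-\gamma_\beta(s')|^2$ and the resulting coefficient $\tfrac{4\beta^2}{\alpha^2}$ in front of $\widetilde{\mathcal A}$ are correct), Fourier analysis of $R^\kappa_\Sigma$, and extraction of the threshold singularity --- and your bookkeeping $\alpha\eta=\beta^4\bigl(\int\!\!\int\mathcal A\bigr)^2+o(\beta^4)$ reproduces the paper's relation $\delta=\bigl(\int\!\!\int\mathcal A\bigr)\beta^2+o(\beta^2)$, $\lambda=-\tfrac14\alpha^2-\delta^2$. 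The difficulty, however, is that what you defer as ``the principal obstacle'' is not a technicality to be checked at the end: it is the entire content of the proof. The paper resolves it by conjugating with the exponential weights $V_{\pm\alpha}(s)=\mathrm e^{\pm\alpha|s|/4}$, which turns the non-normalizable threshold resonance (the constant function) into a genuine rank-one piece: writing $b_\delta(p)=\frac{\alpha^2}{2}\frac{1}{p^2+\delta^2}+r_{\delta,\alpha}(p)$ for the multiplier of $(I-\alpha Q_\Sigma(\kappa_\delta))^{-1}$ and sandwiching with $V_{-\alpha}$ gives $\hat B_\delta=\frac{1}{\delta}L+M_\delta$ with $L=\frac{\alpha^2}{4}(\cdot,V_{-\alpha})V_{-\alpha}$ bounded of rank one and $\|M_\delta\|_{HS}$ uniform in $\delta$; correspondingly the perturbation is dressed as $\check D_\epsilon=V_\alpha D_\epsilon V_\alpha$, which stays Hilbert--Schmidt thanks to the exponential off-diagonal decay of the kernel. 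After this, $I-\hat B_\delta\check D_\epsilon$ is a rank-one perturbation of an invertible operator, the secular equation $\delta=\epsilon\,\mathrm{Tr}[L\check D^{(1)}(\kappa_\delta)]+o(\epsilon)$ follows, and uniqueness is automatic from the rank-one structure --- no separate monotonicity argument is needed. Without some such device your ``effective operator'' $-\partial_s^2-2\beta^2\widetilde{\mathcal A}$ remains heuristic, and your subsequent appeal to Simon's weak-coupling formula is itself an unproven step here, since that theorem is stated for multiplication operators, not for the non-local trace-class perturbation $\widetilde{\mathcal A}$ (the generalization is true, but proving it amounts to redoing the rank-one threshold analysis you are trying to avoid). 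In short: correct mechanism, correct coefficient, but the proof of the hard part is missing rather than different.
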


\begin{remarks} \emph{(a) The nature of the first correction  term:} to understand the nature of the leading term in the gap expansion expressed by
Theorem~\ref{th-main} we note that it is comes from two sources, as will be made clear below: from the difference of Birman-Schwinger kernels referring to $\Gamma_\beta $ from $\Gamma $, measured by
$$
K_0 \left(\frac{\alpha}{2}|\gamma_\beta  (s)-\gamma_\beta (s')|\right)- K_0\left(\frac{\alpha}{2}|s-s'|\right)\,
$$
and from the measure of shortening the Euclidean distance of the curve points relative to that measured along the curve arc,
$$
|\gamma_\beta (s) -\gamma_\beta (s')|- |s-s'|\,.
$$
In this context the appearance of $K_0'$ in (\ref{eq-defA}) is natural, while the second factor entering the coefficient in (\ref{eq-defA}) reflects the behavior of the above geometric quantity. \\[.5em]
\emph{(b) The broken line example:} Let us specify the asymptotics for the simple example we have used as a motivation in the introduction. Given $\beta  \in [0, \pi )$ we define $\Gamma_\beta  :=\{ (x,y)\in \R^2 \,:\, y=0\,\, \mathrm{ for}\,\, x\leq 0\,\, \wedge
\,\,  y=x \tan \beta  \,\, \mathrm{ for}\,\, x>  0 \}$. The arc length parameterization yields
$$ \gamma_\beta  (s)= \left\{ \begin{array}{ll}
(s\,\cos\beta, s\,\sin\beta )  & \mathrm{for}\; s>0\,  \\[.5em]
(s,0)  & \mathrm{otherwise}\,. \end{array} \right.
$$
We introduce the notation
$$
\Omega =\{(s,s')\in \R^2 \,:\, \sgn s = -\sgn s' \} \quad \mathrm{
and}\quad  \Omega^c =  \R^2 \setminus \Omega \,,
$$
for the union of the open second and fourth quadrant, and its complement. A straightforward calculation then leads to
\begin{equation}\label{eq-defA1}
 \mathcal{A}(s,s')= \frac{\alpha^4}{32\pi} K_0'\left( \frac{\alpha }{2}|s-s'|
\right)\frac{|ss'|}{|s-s'|}\,\chi_{\Omega}(s,s')\,,
\end{equation}
where $\chi_{\Omega}(\cdot ,\cdot)$ is the characteristic function of $\Omega$. Note that coefficient is in this case proportional to $\alpha ^2$ dependence, since
$$
\int_{\R\times \R }\mathcal{A}(s,s')\,\mathrm{d}s\,\mathrm{d}s' =
\frac{\alpha }{4\pi } \int_{\Omega} K_0'\left( |s-s'|
\right)\frac{|ss'|}{|s-s'|}\,\mathrm{d}s\,\mathrm{d}s'\,.
$$
This is what one expects, because the broken line is self-similar object and a change of the coupling constant is equivalent to a scaling transformation. Note that the last integral can be calculated explicitly. Namely, 
\begin{eqnarray} \nonumber
\int_{\Omega} K_0'\left( |s-s'|
\right)\frac{|ss'|}{|s-s'|}\,\mathrm{d}s\,\mathrm{d}s'=
2   \int_{\R_+ \times \R_+ }K_0'\left( s+s'
\right)\frac{ss'}{s+s'}\,\mathrm{d}s\,\mathrm{d}s' = \\ \nonumber -\left( \int_{0}^{\infty }K_1 (t)t^2 \mathrm{d}t \right)
\left(  \int_0 ^{\pi /2} \frac{\sin 2 \psi }{(\cos \psi +\sin \psi )^4}  \mathrm{d}\psi \right) = -\frac{2}{3}\,.
\end{eqnarray}
Consequently, the spectral gap is given by $\frac{\alpha^2}{36\pi^2} \beta^4$, or in the relative expression
$$
\frac{-\frac14\alpha^2 - \lambda (H_{\Gamma_\beta})}{-\frac14\alpha^2} = - \frac{1}{9\pi^2}\beta^4 + o(\beta^4)\,.
$$
\end{remarks}

\section{Wiggling one halfline} \label{sec-mildII}
\setcounter{equation}{0}

In the introduction we warned that the above result holds only in the true weak-coupling limit. To make this caveat more illustrative, let us now discuss the case when the unperturbed `zero-angle' curve has at least one isolated eigenvalue. Consider a planar curve $\overline{\Gamma}$ being a local deformation of the straight line which satisfies the assumptions ${\bf H_1}$--${\bf H_3}$. Its arc-length parameterization is described by the function $\overline{\gamma}:\, \R \to \R^2 $. Now we perturb it: for $\varphi \in (0, \pi)$ we construct the curve $\overline{\Gamma}_\varphi$ which is the graph of
$$ \overline{\gamma}_\varphi  (s)= \left\{ \begin{array}{ll}
(s\, \cos  \varphi , s\, \sin \varphi )  & \mathrm{for}\;\, s>s_0 \,,  \\[.5em]
 \overline{\gamma}(s)  & \mathrm{otherwise}\,,
  \end{array} \right.
$$
where $s_0 \geq s_1$ is a suitable point. Without loss of generality we may assume $s_0= 0$ since we have a `translational' freedom in the choice of the parameterization. The deformation $\overline{\Gamma}_\varphi  $ of $\overline{\Gamma}$ thus means wiggling the `right' halfline end by the angle $\varphi$. The set
$$
\Omega=\{(s,s')\in \R^2 \,:\, \sgn s=- \sgn s'\}
$$
can be decomposed as follows,
$$
\Omega=\Omega^I\cup \Omega^{\mathit{II}}\,,
$$
where $\Omega^I:= \{(s,s')\in \Omega \,:\, s  \leq 0\}$ an $\Omega^{\mathit{II}}:=\Omega\setminus \Omega^{\mathit{I}}$.

If $H_{\overline{\Gamma }}$ has more than one eigenvalue we arrange them in the ascending order. Suppose that $\lambda (H_{\overline{\Gamma }})= \lambda_{k} (H_{\overline{\Gamma }}),\: k=j+1, \dots ,j+m$ is such an eigenvalue of the multiplicity $m$. In fact, one conjectures that $m=1$ but since the simplicity of the spectrum has not been proven, we consider degenerate eigenvalues as well. To state the result we employ an integral operator $A_k$
in $L^2 (\R)$ with the kernel
$$
\mathcal{A}_k(s,s')= \breve{\mathcal{A}}_k
(s,s')\chi_{\Omega^{I}}+ \breve{\mathcal{A}}_k
(s',s)\chi_{\Omega^{\mathit{II}}}\,,
$$
where
$$
\breve{\mathcal{A}}_k (s,s')= - \frac{\alpha^2 \kappa_k}{2\pi}\,
\frac{s\,\gamma_2 (s')}{\overline{\rho } (s,s')}\, K_0 ' \big(\kappa_k
\overline{\rho} (s,s') \big)\,,\quad \kappa_k := \sqrt{- \lambda_k
(H_{\overline{\Gamma } })  }
$$
and
$$
\overline{\rho }(s,s')=|\gamma (s)-\gamma (s')|\,.
$$
Furthermore, $\{f_k\}_{k=j+1}^{j+m}$ stands for the corresponding Birman-Schwinger eigenfunctions, see (\ref{eq-ef2}) below.

\begin{theorem}\label{th-main2}
The eigenvalue $\lambda (H_{\overline{\Gamma}})$ splits under the perturbation, in general, into $m$ eigevalues of $H_{\overline{\Gamma
}_\varphi}$ and  the following asymptotic expansion
\begin{equation}\label{eq-mainev}
\lambda_k (H_{\overline{\Gamma } _\phi}) = \lambda_k
(H_{\overline{\Gamma } }) +(A_k f_k , f_k )_{L^2(\R)}\varphi
+o(\varphi)\,.
\end{equation}
is valid for $k=j+1, \dots , j+m$.
\end{theorem}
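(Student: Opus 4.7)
The natural tool is the Birman--Schwinger (BS) principle combined with a Lyapunov--Schmidt reduction. For $\kappa>\alpha/2$, the number $-\kappa^2$ is an eigenvalue of $H_{\overline{\Gamma}_\varphi}$ if and only if $1$ is an eigenvalue of the BS operator $Q^\kappa_{\overline{\Gamma}_\varphi}$ on $L^2(\R)$ with integral kernel $\frac{\alpha}{2\pi}K_0\bigl(\kappa\,|\overline{\gamma}_\varphi(s)-\overline{\gamma}_\varphi(s')|\bigr)$. At $(\kappa,\varphi)=(\kappa_k,0)$ the eigenvalue $1$ has multiplicity $m$ with eigenbasis $\{f_\ell\}_{\ell=j+1}^{j+m}$; we choose the $f_\ell$ so that the corresponding lifts $u_\ell\in L^2(\R^2)$ through the free resolvent form an orthonormal system. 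The goal is to track the branch $\kappa_\varphi=\kappa_k+\delta(\varphi)$ of solutions of the BS equation.

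The first key observation is that $\partial_\varphi Q^\kappa_{\overline{\Gamma}_\varphi}\big|_{\varphi=0}$ is supported on $\Omega=\Omega^I\cup\Omega^{\mathit{II}}$: for $s,s'\le 0$ the two curves coincide, and for $s,s'>0$ the wiggled map is a rigid rotation of the straight right-halfline asymptote, so Euclidean distances are preserved. On $\Omega$ a direct Taylor expansion of $|\overline{\gamma}_\varphi(s)-\overline{\gamma}_\varphi(s')|^2$ in $\sin\varphi,\cos\varphi$, followed by the chain rule with $K_0'$, produces an integral operator whose bilinear form against the $f_\ell,f_k$ matches $(A_kf_\ell,f_k)_{L^2(\R)}$ up to the global constant fixed below.

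The reduction itself is standard. Let $P$ denote the orthogonal projector in $L^2(\R)$ onto $\mathrm{span}\{f_\ell\}$. Writing $g_\varphi=Pg_\varphi+(I-P)g_\varphi$ and linearizing the equation $(Q^{\kappa_k+\delta}_{\overline{\Gamma}_\varphi}-I)g_\varphi=0$ in $(\delta,\varphi)$, the complementary component is determined to first order by inverting $(Q^{\kappa_k}_{\overline{\Gamma}}-I)$ on the range of $I-P$, while projection with $P$ yields an $m\times m$ linear eigenvalue problem. A Feynman--Hellmann-type identity, based on the two-dimensional free-resolvent relation $\partial_\kappa G_\kappa(z)=-2\kappa\,G_\kappa^{(2)}(z)$ (where $G_\kappa^{(2)}$ is the kernel of $(-\Delta+\kappa^2)^{-2}$), gives
\[
(f_\ell,\partial_\kappa Q^{\kappa_k}_{\overline{\Gamma}}f_k)_{L^2(\R)}=-\frac{2\kappa_k}{\alpha}\,(u_\ell,u_k)_{L^2(\R^2)}=-\frac{2\kappa_k}{\alpha}\,\delta_{\ell k},
\]
so $P\partial_\kappa Q^{\kappa_k}_{\overline{\Gamma}}P$ is a scalar multiple of the identity. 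The reduced equation thus becomes $\delta\,\mathbf{c}=\frac{\alpha\varphi}{2\kappa_k}M\mathbf{c}+o(\varphi)$ with $M_{\ell k}=(f_\ell,\partial_\varphi Q^{\kappa_k}_{\overline{\Gamma}}f_k)_{L^2(\R)}$, producing $m$ values $\delta=\frac{\alpha\varphi}{2\kappa_k}\mu+o(\varphi)$ as $\mu$ runs through the eigenvalues of $M$. Translating back via $\lambda_\varphi=\lambda_k-2\kappa_k\delta+O(\delta^2)$ gives $\lambda_\varphi-\lambda_k=-\alpha\mu\varphi+o(\varphi)$, and identifying $A_k$ as $-\alpha$ times $\partial_\varphi Q^{\kappa_k}_{\overline{\Gamma}}|_{\varphi=0}$ through Step~2 yields (\ref{eq-mainev}): in the basis diagonalizing $M$ the shift of the $k$-th eigenvalue is precisely $(A_kf_k,f_k)_{L^2(\R)}\varphi+o(\varphi)$.

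The main technical hurdle is making this finite-dimensional reduction rigorous when $Q^\kappa_{\overline{\Gamma}_\varphi}$ is not compact. What is needed is Hilbert--Schmidt (preferably trace-class) control of $\partial_\varphi Q^{\kappa_k}_{\overline{\Gamma}}$, which follows from the exponential decay of $K_0$ and $K_0'$ at infinity, together with the fact that $\overline{\gamma}_2$ is supported in a bounded interval (the deformation region of $\overline{\Gamma}$) and an exponential-decay estimate on the $f_\ell$ inherited from their BS representation via $K_0$. For $\kappa_\varphi$ near $\kappa_k>\alpha/2$ the supremum of the essential spectrum of $Q^{\kappa_\varphi}_{\overline{\Gamma}_\varphi}$ stays strictly below $1$, so the Riesz projector at the eigenvalue $1$ depends analytically on $(\kappa,\varphi)$ near $(\kappa_k,0)$; once this framework is in place the degenerate case is handled by the $m\times m$ matrix analysis above without additional complications.
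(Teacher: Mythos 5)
Your proposal is correct and works within the same overall framework as the paper --- the generalized Birman--Schwinger principle plus a first-order expansion of $Q_{\overline{\Gamma}_\varphi}(\kappa)-Q_{\overline{\Gamma}}(\kappa)$, whose kernel you rightly observe is supported on $\Omega$ because the wiggling acts as a rigid rotation on the right asymptote and leaves the left part untouched. Where you genuinely diverge is in how the $\kappa$-dependence near the unperturbed eigenvalue is handled. The paper factorizes $I-\alpha Q_{\overline{\Gamma}_\varphi}(\kappa_\delta)=(I-\alpha Q_{\overline{\Gamma}}(\kappa_\delta))\bigl[I-\overline{B}_\delta\overline{D}_\varphi\bigr]$ and extracts the rank-$m$, $\delta^{-2}$-singular part of $\overline{B}_\delta=(I-\alpha Q_{\overline{\Gamma}}(\kappa_\delta))^{-1}$ by means of the pseudo-resolvent identity $Q(\kappa)-Q(\kappa_0)=(\kappa_0^2-\kappa^2)Q(\kappa)Q(\kappa_0)$ and a Neumann series (its Lemma on the expansion of $\overline{B}_\delta$); you instead run a Lyapunov--Schmidt reduction directly on the eigenvalue equation and compute $\partial_\kappa$ of the BS eigenvalue by a Feynman--Hellmann identity based on $\partial_{\kappa^2}G_\kappa=-G_\kappa^2$. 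These two devices carry exactly the same information: the pseudo-resolvent identity gives $P\,\partial_{\kappa^2}Q_{\overline{\Gamma}}(\kappa_k)P=-\alpha^{-2}P$ on the eigenspace, which is precisely the statement that the lifted eigenfunctions $g_\ell=\mathcal{G}_{\kappa_k}\ast f_\ell\delta_{\overline{\Gamma}}$ satisfy $(g_\ell,g_k)_{L^2(\R^2)}=\alpha^{-2}\delta_{\ell k}$ when $\|f_\ell\|_{L^2(\R)}=1$; this is also why your normalization (orthonormal lifts $u_\ell$) and the paper's ($\|f_k\|=1$) coincide and the final constants agree, $\lambda_k(H_{\overline{\Gamma}_\varphi})-\lambda_k(H_{\overline{\Gamma}})=-\alpha\,(D^{(1)}(\nu_k)f_k,f_k)\varphi+o(\varphi)=(A_kf_k,f_k)\varphi+o(\varphi)$. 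What your route buys is a cleaner conceptual picture (analytic perturbation of an isolated BS eigenvalue, with the degenerate case handled by the $m\times m$ matrix $M$ from the start) and an explicit justification that $P\partial_\kappa QP$ is scalar; what the paper's route buys is that the needed invertibility of $S_k$ on $P_k^\perp$ and the uniform bound on the remainder $W_k$ come packaged with the explicit Neumann-series expansion, avoiding an appeal to analytic Riesz-projector theory for a non-compact $Q$. Your closing technical paragraph (Hilbert--Schmidt control of $\partial_\varphi Q$ via the compact support of $\overline{\gamma}_2$ and the exponential decay of $K_0'$, plus the fact that the essential spectrum of $\alpha Q$ stays below $1$ for $\kappa>\alpha/2$) correctly identifies and resolves the points that make the reduction rigorous, and matches what the paper proves in its Lemma on the expansion of $\overline{D}_\varphi$.
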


\section{Proof of Theorem~\ref{th-main}} \label{s:mainproof}
\setcounter{equation}{0}

In this section we demonstrate our main result. First we check that the leading-order coefficient is well defined, then we prove the theorem itself using a refined Birman-Schwinger-type argument.

\subsection{Proof of Lemma~\ref{le-traceclassA} } \label{ss:prooflemma}

There are two parts of the $(s,s')$ plane where the behavior of the kernel has to be checked. Let us first inspect how $\mathcal{A}(s,s')$ for looks like as $|s-s'|\to 0$. Since the bending is bounded by assumption, $|\phi (s,s')|\leq c$ for any $s,s' \in \R$, one can easily check that
\begin{equation}\label{eq-estim1a}
0 \leq -\left( |s-s'|^{-1} \left( \int_{s'}^s  \phi \right)^2 -
\int_{s'}^s \phi^2 \right) \leq c'|s-s'|\,.
\end{equation}
On the other hand, the function $\R_+ \ni x \mapsto K_0' (x) $ is continuous behaving as $K'_0 (x)\sim x^{-1}$ in the limit $x\to 0+$, cf.~\cite{AS}. Combining this with (\ref{eq-estim1a}) we conclude that the kernel is bounded, $|\mathcal{A}(s,s')| \leq c''$, and consequently, it has no singularities and the convergence of the integral is determined by the decay of $\mathcal{A}(s,s')$ at infinity.

Far from the origin we use the fact the $\Gamma $ is there straight so that $\phi (s,s') = 0 $ holds  for $(s,s')\in \Omega_\mathrm{asympt}:= (-\infty , s_2)\times (-\infty , s_2)\cup (s_1 ,\infty ) \times (s_1, \infty )$, hence in view of the definition (\ref{eq-defA}), $\mathcal{A}(s,s')$ vanishes at this set as well. Take now that $s,s'\in \Omega \setminus \Omega_\mathrm{asympt}$ and suppose $|s- s'|\to \infty$. It is easy to see that far from the origin we have $|s-s'|=|s|+|s'|$, and using the asymptotics $K'_0 (x)\sim \mathrm{e}^{-x} x^{-1/2}$ for $x\to +\infty$ together with (\ref{eq-estim1a}) one obtains $| \mathcal{A}(s,s')| \leq c \mathrm{e}^{-\alpha(|s|+|s'|)/2}(|s|+|s'|)$. This yields
\begin{equation}\label{eq-Abound1}
\int_{\R\times \R}\mathcal{A}(s,s')\,\mathrm{d}s \,\mathrm{d}s'< \infty\,,
\end{equation}
completing thus the proof of Lemma~\ref{le-traceclassA}.

\begin{remark}
Note that $\mathcal{A}(s,s')\geq 0$ holds for any $s,s'\in \R$. Indeed, we have $K'_0(x) <0$ for any $x >0$ and the bracketed expression in  the definition (\ref{eq-defA}) is also negative, cf.~(\ref{eq-estim1a}).
\end{remark}

\subsection{Modification of the Birman-Schwinger argument}

Fix $\kappa>0$. It is well known that the resolvent $G_\kappa =(-\Delta +\kappa^2)^{-1}\,:\, L^2(\R^2) \to L^2(\R^2)$ is an integral operator with
the kernel
$$
\mathcal{G}_\kappa (x,y) = \frac{1}{2\pi} K_0 (\kappa |x-y|)\,.
$$
Relying on the general results of \cite{BEKS} we can define a bilateral embedding of $G_k$ into the space $L^2 (\Gamma ) \cong L^2 (\R)$. Specifically, let $Q_{\Gamma_\beta } (\kappa )$ stand for an integral operator with the kernel
\begin{equation}\label{eq-kernelQ}
\mathcal{Q}_{\Gamma_\beta } (\kappa; s,s')=\frac{1 }{2\pi} K_0
(\kappa |\gamma_\beta (s)-\gamma_\beta  (s')|)\,.
\end{equation}
The generalized Birman-Schwinger principle then reads
\begin{equation}\label{eq-BS}
  -\kappa^2 \in \sigma _{\mathrm{d}}(H_{\Gamma_\beta }) \quad \Leftrightarrow
  \quad \ker (I-\alpha Q_{\Gamma_\beta  }(\kappa )) \neq
   \emptyset\,,
\end{equation}
and moreover,
$$
\dim \ker (H_{\Gamma_\beta } + \kappa^2) = \dim \ker (I-\alpha
Q_{\Gamma_\beta  }(\kappa ))\,.
$$
cf.~\cite[Cor.~2.3]{BEKS}. Since $\sigma _{\mathrm{disc}}(H_\beta) \subset (\infty, -\frac14\alpha^2)$ holds in view of (\ref{spess}), we can restrict ourselves to $\kappa \in (\frac12\alpha, \infty )$. Since the straight line Birman-Schwinger operator $I-\alpha Q_{\Sigma }(\kappa )$ is invertible for $\kappa>\frac12\alpha$ we obtain
$$
I-\alpha Q_{\Gamma_\beta  }(\kappa )=(I-\alpha Q_{\Sigma }(\kappa
)) \left[I - \alpha (I-\alpha Q_{\Sigma }(\kappa ))^{-1}
(Q_{\Gamma_\beta }(\kappa )- Q_{\Sigma }(\kappa ))\right]\,.
$$
Combining the above equivalence with (\ref{eq-BS}) we conclude that $ -\kappa^2 \in \sigma _{\mathrm{d}}(H_\phi)$ holds \emph{iff}
\begin{equation}\label{eq-BS1}
\ker \left[I - \alpha (I-\alpha Q_{\Sigma }(\kappa ))^{-1}
(Q_{\Gamma_\beta }(\kappa )- Q_{\Sigma }(\kappa ))\right] \neq
\emptyset\,.
\end{equation}
We denote
$$
\epsilon = \beta^2 \quad \mathrm{and } \quad  \kappa_\delta =
\sqrt{\textstyle{\frac14}{\alpha ^2}+\delta^2}\,,\;\;\delta>0\,.
$$
With these notations equation (\ref{eq-BS1}) reads
\begin{equation}\label{eq-BS2}
\ker \left[I - B_\delta D_\epsilon (\kappa_\delta )\right] \neq
\emptyset\,,
\end{equation}
where
\begin{equation}\label{eq-DB}
D_{\epsilon }(\kappa ): =\alpha \left( Q_{\Gamma_\beta }(\kappa )-
Q_{\Sigma }(\kappa )\right) \quad \mathrm{and }\quad B_\delta
:= (I-\alpha Q_{\Sigma }(\kappa_\delta ))^{-1}\,.
\end{equation}
To make use of the equivalence (\ref{eq-BS}) we need an auxiliary regularization. Specifically, let $$V_{a} (s):= \mathrm{e}^{a |s|/4}\,$$ and put
$$
\hat{B}_\delta := V_{-\alpha }B_\delta V_{-\alpha }\,,\quad
\check{D}_\epsilon:=V_{\alpha }D_\epsilon  V_{\alpha }\,.
$$
Obviously, $\left[I - B_\delta D_\epsilon (\kappa_\delta )\right]f =0$ holds \emph{iff} $\left[I - \hat{B}_\delta \check{D}_\epsilon
(\kappa_\delta )\right]\hat{f} =0$, where $\hat{f} = V_{-\alpha }f$, and the Birman-Schwinger principle (\ref{eq-BS}) modifies to the form
\begin{equation}\label{eq-BS2}
  -\kappa_\delta^2 \in \sigma _{\mathrm{d}}(H_\beta ) \quad \Leftrightarrow
  \quad \ker (I-\hat{B}_\delta \check{D}_\epsilon
(\kappa_\delta )) \neq \emptyset\,,
\end{equation}
with the additional property that
$$
\dim \ker (H_{\beta } + \kappa_\delta^2) = \dim \ker
(I-\hat{B}_\delta \check{D}_\epsilon (\kappa_\delta ))\,;
$$
this will be the starting point for the further analysis of the discrete spectrum of $H_{\Gamma_\beta }$.

\subsection{Auxiliary results: asymptotics of $\check{D}_\epsilon (\kappa ) $ and $\hat{B}_\delta$}

The first needed statement concerns the asymptotics of $D_\epsilon(\kappa )$ for $\epsilon$ small.

\begin{lemma} \label{le-expD}
We have
\begin{equation}\label{eq-expD}
D_\epsilon (\kappa ) = D^{(1)}(\kappa ) \epsilon +D^{(2)}_\epsilon
(\kappa ) \epsilon^2 \,,
\end{equation}
where $D^{(1)} (\kappa )$ is an integral operator with the kernel
$$
\mathcal{D}^{(1)} (\kappa ; s,s') = -\frac{ \alpha \kappa }{4\pi
}\, K_0 '(\kappa |s-s'|)\left( |s-s'|^{-1} \left( \int_{s'}^s  \phi
\right)^2 - \int_{s'}^s \phi^2 \right) \,.
$$
Moreover, both $D^{(1)}(\kappa )$ and $D^{(2)}_\epsilon (\kappa )$ are Hilbert-Schmidt operators and the \textrm{HS}-norm of $D^{(2)}_\epsilon (\kappa )$ has a uniform bound w.r.t. $\epsilon$.
\end{lemma}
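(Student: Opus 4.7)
The plan is to obtain the expansion (\ref{eq-expD}) by a double Taylor expansion: first of the chord length $r_\beta:=|\gamma_\beta(s)-\gamma_\beta(s')|$ in the parameter $\beta$, and then of $K_0(\kappa r_\beta)$ around $K_0(\kappa|L|)$ with $L:=s-s'$. Since the kernel of $D_\epsilon(\kappa)$ equals $\frac{\alpha}{2\pi}[K_0(\kappa r_\beta)-K_0(\kappa|L|)]$, isolating the part linear in $\beta^2=\epsilon$ will identify $\mathcal{D}^{(1)}(\kappa;s,s')$, and what is left defines $\epsilon^2\mathcal{D}^{(2)}_\epsilon(\kappa;s,s')$. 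It remains to verify that both operators are Hilbert--Schmidt and that $\|D^{(2)}_\epsilon(\kappa)\|_{\mathrm{HS}}$ is bounded uniformly in $\epsilon\in(0,1]$.

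For the geometric step I would use representation (\ref{eq-reprgamma}): after a translation and rotation placing $\gamma_\beta(s')$ at the origin with $\dot\gamma_\beta(s')=(1,0)$, one has $r_\beta^2=(\int_{s'}^s\cos(\beta\phi))^2+(\int_{s'}^s\sin(\beta\phi))^2$. Expanding $\cos$ and $\sin$ in $\beta\phi$ and using that $\phi(\cdot,s')$ is bounded on $\R$ (the total bending being finite), I get $r_\beta^2=L^2+\beta^2[(\int\phi)^2-L\int\phi^2]+\beta^4 E(\beta;s,s')$ with $|E|\leq c L^2$ uniformly in $\beta\in[-1,1]$. Taking the square root, invoking (\ref{eq-estim1a}) to cope with the small-$|L|$ zone where the bracket is itself $O(L^2)$, gives $r_\beta-|L|=\frac{\beta^2}{2|L|}[(\int\phi)^2-L\int\phi^2]+\beta^4\tilde R(\beta;s,s')$ with $|\tilde R|\leq c'(1+|L|)$ uniformly in $\beta$.

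For the analytic step I apply Taylor's formula with integral remainder to $K_0$, writing $K_0(\kappa r_\beta)-K_0(\kappa|L|)=\kappa K_0'(\kappa|L|)(r_\beta-|L|)+\kappa^2(r_\beta-|L|)^2\int_0^1(1-t)K_0''(\kappa|L|+t\kappa(r_\beta-|L|))\,\mathrm{d}t$. Substituting the geometric expansion, the product of the $\beta^2$ piece with $\kappa K_0'(\kappa|L|)$ reproduces $\epsilon\mathcal{D}^{(1)}(\kappa;s,s')$, while the cross term $\kappa K_0'(\kappa|L|)\cdot\beta^4\tilde R$ together with the $K_0''$-remainder provide $\epsilon^2\mathcal{D}^{(2)}_\epsilon(\kappa;s,s')$. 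To check $\mathcal{D}^{(1)}\in L^2(\R\times\R)$ one uses $K_0'(x)\sim -1/x$ at the origin in combination with the $O(|L|)$ estimate (\ref{eq-estim1a}) for the bracket, which moreover vanishes identically on $\Omega_\mathrm{asympt}$, while the exponential asymptotics $K_0'(x)\sim -x^{-1/2}\mathrm{e}^{-x}$ at infinity takes care of the tail. For $\mathcal{D}^{(2)}_\epsilon$ the potential $1/|L|^2$ singularity of $K_0''$ near the diagonal is absorbed by the factor $(r_\beta-|L|)^2=O(\beta^4 L^2)$ coming again from (\ref{eq-estim1a}); at infinity the exponential decay of $K_0',K_0''$ more than compensates the at-worst linear growth of $\tilde R$, and all estimates are $\beta$-independent, yielding the uniform $\mathrm{HS}$ bound.

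The main technical obstacle is assembling pointwise uniform-in-$\beta$ bounds for $\tilde R$ and for the integrated $K_0''$-remainder across three distinct regimes --- the near-diagonal singularity zone, the compact bent window where $\phi$ genuinely varies, and the asymptotic straight zone where $\phi$ is constant on each side --- and verifying that the resulting pieces combine into a genuine $L^2(\R\times\R)$ bound independent of $\epsilon$. Once this bookkeeping is in place, the rest amounts to standard Taylor expansion together with the known asymptotics of the MacDonald function.
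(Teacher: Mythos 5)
Your proposal follows essentially the same route as the paper: expand the chord length $|\gamma_\beta(s)-\gamma_\beta(s')|$ in $\epsilon=\beta^2$ via the $\cos(\beta\phi)$, $\sin(\beta\phi)$ representation, then Taylor-expand $K_0$ to second order around $K_0(\kappa|s-s'|)$, and verify the Hilbert--Schmidt bounds by combining the $O(|s-s'|)$ estimate (\ref{eq-estim1a}) with the small- and large-argument asymptotics of $K_0'$ and $K_0''$ and the vanishing of $\phi$ on the asymptotically straight region. The only differences are cosmetic (integral versus Lagrange form of the remainder, and expanding $\rho^2$ before taking the square root rather than differentiating $\rho$ in $\epsilon$ directly), so the argument is correct and matches the paper's proof.
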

\begin{proof}
Denote
$$
\rho  (s,s'):=|\gamma_\beta  (s)-\gamma_\beta  (s')|\,,\quad
\sigma (s,s')=|s-s'|\,.
$$
First we expand $\rho$ with respect $\epsilon =\beta ^2$. To this aim we employ the formula
\begin{equation}\label{eq-estim3}
  \rho  (s,s')= \left( \left(\int_{s'}^s \cos \beta \phi ( u, s')\mathrm{d}u
  \right)^2+
 \left(\int_{0}^s \sin \beta \phi (u, s')\mathrm{d}u \right)^2\right)^{1/2}\,,
\end{equation}
which follows from (\ref{eq-reprgamma}). Using the mean value theorem together with the expansions $\sin\nu = \nu  - \frac{\nu^3}{3!}+\cdots$ and  $\cos \nu= 1 -\frac{\nu^2}{2!}+\cdots$ for $\nu $ small we conclude that there are $\theta_i = \theta_i (\epsilon) \in (0,1),\: i=1,2,$ such that
$$
\rho =\sigma +\left. \frac{\mathrm{d}\rho }{\mathrm{d}\epsilon }\right|_0 \epsilon
+\zeta_\epsilon (s,s')\epsilon^2\,,\qquad  \zeta_\epsilon
(s,s')= \left. \frac{\mathrm{d}^2\rho}{\mathrm{d}\epsilon^2 }\right|_{\theta_2
\epsilon}\,\theta_1\,,
$$
where the symbol $|_a$  conventionally means that the value of a  function is supported at the point $a$. Furthermore, a straightforward calculation shows that
\begin{equation}\label{eq-rhobound}
\hspace{-2em}\left. \frac{d\rho }{d\epsilon }\right|_0 = -\frac{1}{2}\left(
|s-s'|^{-1}\left( \int_{s'}^s \phi \right)^2 - \int_{s'}^s \phi^2
\right) \,, \quad \left|\zeta_\epsilon (s,s')\right| \leq
C|s-s'|\,.
\end{equation}
Consequently,
\begin{equation}\label{eq-estim1}
 \hspace{-2em} \rho (s,s')-\sigma  (s,s')=-\frac{1}{2}\left(
|s-s'|^{-1}\left( \int_{s'}^s \phi \right)^2 - \int_{s'}^s \phi^2
\right) \epsilon +
  \zeta_\epsilon
  (s,s')\epsilon^2\,
\end{equation}
Now we expand $K_0 (\kappa \rho )$. Employing again the mean value theorem we have
\begin{equation}\label{eq-expK}
\hspace{-4em} K_0 (\kappa \rho )=K_0 (\kappa \sigma  ) + \kappa K_0  ' (\kappa
\sigma  )(\rho -\sigma )+\kappa ^2  K_0  '' (\kappa
(\sigma+\check{\theta}_2 (\rho - \sigma ) ) )\check{\theta}_1(\rho
- \sigma  )^2
\end{equation}
with some $\check{\theta}_i = \check{\theta}_i (\epsilon )\in (0,1)$, $i=1,2$. Consequently, inserting (\ref{eq-estim1}) to (\ref{eq-expK}) we obtain 
\begin{equation}\label{eq-decompK}
K_0 (\kappa \rho )=K_0 (\kappa \sigma  ) +\mathcal{K}^{(1)}(s,s')
\,\epsilon + \mathcal{K}^{(2)}_\epsilon(s,s')\,\epsilon^2\,,
\end{equation}
where
$$
\mathcal{K}^{(1)}(s,s')=-\frac{\kappa }{2 }\,K_0 '(\kappa
|s-s'|)\left( |s-s'|^{-1}\left( \int_{s'}^s \phi \right)^2 -
\int_{s'}^s \phi^2 \right) \,,
$$
and
$$
\mathcal{K}^{(2)}_\epsilon(s,s') = \kappa K_0  ' (\kappa \sigma
)\,\zeta_\epsilon +\kappa ^2  K_0  '' (\kappa
(\sigma+\check{\theta}_2 (\rho - \sigma ) ) )\,\check{\theta}_1(\rho
- \sigma  )^2 \,.
$$
Using the asymptotics $K'_0 (x)\sim x^{-1}$ and $K''_0 (x)\sim x^{-2}$ for $x\to 0+$ together with the inequalities $\left| \left. \frac{\mathrm{d}\rho }{\mathrm{d}\epsilon }\right|_0 \right| \leq c|s-s'|$, $\:|\zeta_\epsilon(s,s')| \leq c|s-s'| $, and the equation (\ref{eq-estim1}) we verify that both $\mathcal{K}^{(1)}$ and $\mathcal{K}^{(2)}_\epsilon$ are bounded as $|s-s'|\to 0 $. Proceeding in the same way as in the proof of Lemma~\ref{le-traceclassA} we conclude that for $(s,s')\in \Omega_\mathrm{asympt} $ both $\mathcal{K}^{(1)}$ and $\mathcal{K}^{(2)}_\epsilon$ vanish. On the other hand due to the asymptotics of  $K'_0 (\cdot)$ and $K''_0 (\cdot )$ for $x\to +\infty$ we can estimate $\mathcal{K}^{(1)}$ and
$\mathcal{K}^{(2)}_\epsilon$ by
\begin{equation}\label{eq-boundK}
c\,e^{-\kappa (|s|+|s'|)}(|s|+|s'|)\,,
\end{equation}
for $s,s'\in \Omega \setminus \Omega_\mathrm{asympt}$ and $|s-s'|\to \infty$. This implies $\mathcal{K}^{(1)}\,, \mathcal{K}_\epsilon ^{(2)} \in L^2
(\R\times \R) $. Using now the equations (\ref{eq-DB}) and (\ref{eq-kernelQ}) with $\mathcal{D}^{(1)}(\kappa;\cdot ,\cdot )= \frac{\alpha }{2\pi} \mathcal{K}^{(1)}(\cdot,\cdot )$ and $\mathcal{D}_\epsilon^{(2)}(\kappa;\cdot ,\cdot )=\frac{\alpha}{2\pi} \mathcal{K}^{(2)}_\epsilon (\cdot ,\cdot )$ we get the claim of the lemma.
\end{proof}

\begin{remark} \label{re-conth}
\rm{With a latter purpose on mind we note that the function $h(\cdot  )=\int_{\R \times \R} \mathcal{D}^{(1)}(\cdot ; s,s')\,\mathrm{d}s \mathrm{d}s'$ is continuous on $(0,\infty)$. Indeed, for $|s-s'|$ small  we have
$$
\left| K_0 ' (\kappa |s-s'|) -K_0 ' (\kappa' |s-s'|) \right| \sim \frac{|\kappa -\kappa '|}{\kappa \kappa' |s-s'|}\,.
$$
On the other hand, for $s,s'\in \Omega\setminus \Omega_{\mathrm{asympt}}$ we can estimate
\begin{eqnarray*}
\lefteqn{\left| K_0 ' (\kappa |s-s'|) -K_0 ' (\kappa' |s-s'|) \right| \sim
\left|   \frac{\e^{-\kappa (|s|+|s'|)}-\e^{-\kappa' (|s|+|s'|)}}{\sqrt{|s|+|s'|}}\right|} \\[.5em] && \hspace{1em}
\leq |\kappa -\kappa '|\, \e^{-\kappa (|s|+|s'|)}\, (\sqrt{|s|+|s'|})\,.
\end{eqnarray*}
Using the arguments employed in the proof of the previous lemma one can claim that $h(\kappa')=h(\kappa)+\mathcal{O}(\kappa -\kappa')$.  }
\end{remark}

Note that the bound (\ref{eq-boundK}) has the following easy consequence.

\begin{proposition} \label{prop-1}
We have
\begin{equation}\label{eq-expD}
\check{D}_\epsilon (\kappa ) = \check{D}^{(1)}(\kappa ) \epsilon
+\check{D}^{(2)}_\epsilon (\kappa ) \epsilon^2 \,,
\end{equation}
where $\check{D}^{(1)}(\kappa ) = V_\alpha D^{(1)} (\kappa ) V_\alpha $ and $\check{D}_\epsilon^{(1)}(\kappa ) = V_\alpha D^{(1)}_\epsilon (\kappa ) V_\alpha $. Moreover, for any $\kappa >\frac12\alpha$ both $ \check{D}^{(1)}(\kappa )$ and $\check{D}^{(2)}_\epsilon (\kappa )$ are Hilbert-Schmidt.
\end{proposition}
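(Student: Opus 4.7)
The plan is to deduce this almost entirely from Lemma~\ref{le-expD}. The decomposition itself is immediate: since $\check{D}_\epsilon(\kappa) = V_\alpha D_\epsilon(\kappa) V_\alpha$ by definition, the identity $D_\epsilon(\kappa) = D^{(1)}(\kappa)\epsilon + D^{(2)}_\epsilon(\kappa)\epsilon^2$ from Lemma~\ref{le-expD} yields the asserted expansion after sandwiching both summands by $V_\alpha$. So the only nontrivial content is the Hilbert--Schmidt bound on the weighted kernels, and this is where I would do the real work.

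The strategy for the HS property is to use the pointwise bounds on $\mathcal{K}^{(1)}$ and $\mathcal{K}^{(2)}_\epsilon$ established inside the proof of Lemma~\ref{le-expD}, combined with the fact that the weight $V_\alpha(s)V_\alpha(s') = \mathrm{e}^{\alpha(|s|+|s'|)/4}$ is beaten by the exponential decay of $K_0'$ at infinity as long as $\kappa > \tfrac12\alpha$. Concretely, I would split the $(s,s')$-plane into three regions. First, on $\Omega_\mathrm{asympt}$ both $\mathcal{D}^{(1)}(\kappa;s,s')$ and $\mathcal{D}^{(2)}_\epsilon(\kappa;s,s')$ vanish identically (as noted in the proof of Lemma~\ref{le-expD}), so this region contributes nothing. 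Second, on a bounded neighborhood of the diagonal the kernels are bounded and supported, up to a set of measure zero, in a strip where at least one coordinate lies in the compact interval $[s_2,s_1]$ (otherwise we would be in $\Omega_\mathrm{asympt}$); here the weighted kernel is square-integrable by compactness. Third, in the remaining region $\Omega\setminus\Omega_\mathrm{asympt}$ with $|s-s'|$ large, the estimate~(\ref{eq-boundK}) gives
\begin{equation*}
|V_\alpha(s)\mathcal{K}^{(i)}(s,s')V_\alpha(s')|
\leq c\,\mathrm{e}^{-(\kappa-\alpha/4)(|s|+|s'|)}\bigl(|s|+|s'|\bigr),
\end{equation*}
which is square-integrable over $\R\times\R$ as $\kappa - \tfrac{\alpha}{4} > \tfrac{\alpha}{4} > 0$. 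Putting the three regional estimates together produces a finite HS norm for each of $\check{D}^{(1)}(\kappa)$ and $\check{D}^{(2)}_\epsilon(\kappa)$.

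The last thing to check is uniformity in $\epsilon$ for $\check{D}^{(2)}_\epsilon(\kappa)$. Here I would appeal directly to the corresponding uniform HS bound from Lemma~\ref{le-expD}; the constants $c$ in the bound~(\ref{eq-boundK}) applied to $\mathcal{K}^{(2)}_\epsilon$ depend on $\epsilon$ only through quantities like $\zeta_\epsilon$, $\check{\theta}_i$, and $\rho-\sigma$, which are controlled uniformly for $\epsilon\in(0,1]$ via~(\ref{eq-rhobound}) and~(\ref{eq-estim1}), so the bound passes unchanged through the multiplication by $V_\alpha$. I do not foresee any genuine obstacle: the only thing one has to be slightly careful about is verifying that the diagonal neighborhood really does lie, modulo $\Omega_\mathrm{asympt}$, in a set with at least one coordinate bounded, but this is a direct geometric consequence of the definition of $\Omega_\mathrm{asympt}$.
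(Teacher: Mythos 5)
Your argument is correct and is essentially the paper's: the authors state Proposition~\ref{prop-1} as an ``easy consequence'' of the bound (\ref{eq-boundK}), and your regional decomposition simply fills in the details of that observation, with the key point being exactly the comparison $\kappa-\alpha/4>\alpha/4>0$ between the decay of $\mathcal{K}^{(i)}$ and the growth of $V_\alpha(s)V_\alpha(s')$.
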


Our next aim is to analyze the asymptotics of $B_\delta $ for $\delta $ small. In the following lemma we single out the  regular and singular components of this opearator.

\begin{lemma}
The operator $B_\delta$ admits the decomposition
\begin{equation}\label{eq-decomB-1}
B_\delta = S_{\delta  }+R_{\delta  }\,,
\end{equation}
where $S_{\delta  }$ is an integral operator with the kernel
$$
\frac{\alpha ^2}{4}\, \frac{\mathrm{e}^{-\delta |s-s'|}}{\delta }
$$
and $R_{\delta }$ is a bounded operator in $L^2 (\R)$ with a bound uniform w.r.t. $\delta$. 
\end{lemma}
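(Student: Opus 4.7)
The plan is to exploit the fact that $Q_\Sigma(\kappa_\delta)$ is a convolution operator on $L^2(\R)$ and hence can be diagonalized by the Fourier transform. Starting from the standard identity
$$
\int_{\R} K_0(\kappa|s|)\,\mathrm{e}^{-\mathrm{i}\xi s}\,\mathrm{d}s = \frac{\pi}{\sqrt{\xi^2+\kappa^2}}\,,
$$
one sees that the Fourier symbol of $Q_\Sigma(\kappa_\delta)$ equals $\frac{1}{2\sqrt{\xi^2+\kappa_\delta^2}}$. Since $\|\alpha Q_\Sigma(\kappa_\delta)\| = \alpha/(2\kappa_\delta) < 1$ for every $\delta>0$, the operator $B_\delta$ is well defined and is the Fourier multiplier with symbol
$$
M(\xi,\delta) = \frac{2\sqrt{\xi^2+\kappa_\delta^2}}{2\sqrt{\xi^2+\kappa_\delta^2}-\alpha}\,.
$$

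The next step is to extract the singular part of $M$ as $(\xi,\delta)\to (0,0)$. Rationalizing and using $4\kappa_\delta^2-\alpha^2=4\delta^2$ we get
$$
M(\xi,\delta) = \frac{2\sqrt{\xi^2+\kappa_\delta^2}\bigl(2\sqrt{\xi^2+\kappa_\delta^2}+\alpha\bigr)}{4(\xi^2+\delta^2)}\,.
$$
A direct manipulation of the numerator, together with the identity $2\sqrt{\xi^2+\kappa_\delta^2}-\alpha = \frac{2(\xi^2+\delta^2)}{\sqrt{\xi^2+\kappa_\delta^2}+\alpha/2}$ that follows from $(2u-\alpha)(2u+\alpha)=4(\xi^2+\delta^2)$, yields the three-term decomposition
$$
M(\xi,\delta) = \frac{\alpha^2}{2(\xi^2+\delta^2)} + 1 + \frac{\alpha}{2\bigl(\sqrt{\xi^2+\kappa_\delta^2}+\alpha/2\bigr)}\,.
$$

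To conclude I would inverse-Fourier-transform each summand. The standard formula $\frac{1}{2\pi}\int_{\R}\frac{\mathrm{e}^{\mathrm{i}\xi s}}{\xi^2+\delta^2}\,\mathrm{d}\xi = \frac{\mathrm{e}^{-\delta|s|}}{2\delta}$ shows that the singular symbol $\frac{\alpha^2}{2(\xi^2+\delta^2)}$ is the multiplier of the convolution operator $S_\delta$ with kernel $\frac{\alpha^2}{4}\frac{\mathrm{e}^{-\delta|s-s'|}}{\delta}$, matching the statement. The remaining symbol $1+\frac{\alpha}{2(\sqrt{\xi^2+\kappa_\delta^2}+\alpha/2)}$ is bounded: since $\kappa_\delta\geq\alpha/2$ we have $\sqrt{\xi^2+\kappa_\delta^2}+\alpha/2\geq\alpha$, so this symbol lies in the interval $[1,\,3/2]$ uniformly in $\xi\in\R$ and $\delta>0$. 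By Plancherel's theorem it therefore defines a bounded operator $R_\delta$ on $L^2(\R)$ with $\|R_\delta\|\leq 3/2$ independently of $\delta$.

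There is no genuine obstacle in this scheme; the only point requiring a little care is the elementary algebraic identity producing the decomposition of $M$. The conceptual content is that the whole singularity of $B_\delta$ as $\delta\to 0+$ is captured by the explicit rational symbol $\alpha^2/[2(\xi^2+\delta^2)]$, a fact one can read off directly because the straight-line Birman--Schwinger operator is a convolution.
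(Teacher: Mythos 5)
Your proof is correct and takes essentially the same route as the paper: both diagonalize $Q_\Sigma(\kappa_\delta)$ by the Fourier transform and split the resulting multiplier into the singular rational symbol $\frac{\alpha^2}{2(\xi^2+\delta^2)}$, whose inverse transform gives the exponential kernel of $S_\delta$, plus the uniformly bounded remainder $1+\frac{\alpha}{2\left(\sqrt{\xi^2+\kappa_\delta^2}+\alpha/2\right)}$ defining $R_\delta$. Your explicit bound $\|R_\delta\|\le 3/2$ is a slightly sharper, cleaner statement of the paper's uniform bound on $r_{\delta,\alpha}$.
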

\begin{proof} Let $\mathcal{F}$  denote the Fourier-Plancherel operator in $L^2 (\R)$. The BS operator $Q_\Sigma (\kappa )$ corresponding to the straight line is of convolution type, and consequently, $\mathcal{F}^{-1}Q_\Sigma (\kappa )\mathcal{F}$ acts as multiplication by $\frac{1}{2}\alpha (p^2+ \kappa^2)^{-1/2}$, cf.~\cite[Thm.~5.2]{EI}. This means that  $B_\delta =(I-Q_\Sigma (\kappa_\delta ))^{-1}$ is unitarily equivalent to multiplication by
$$
b _{\delta}(p):=
\frac{(p^2+\frac14\alpha^2+\delta^2)^{1/2}}{(p^2+\frac14\alpha^2+\delta^2)^{1/2}-\frac12\alpha}\,.
$$
A straightforward calculation shows that
\begin{equation}\label{eq-beta}
\hspace{6em} b _{\delta}(p)=
\frac{\alpha^2}{2}\frac{1}{p^2+\delta^2}+r_{\delta, \alpha }(p)\,,
\end{equation}
where
$$
r_{\delta, \alpha }(p)= 1+\frac{\alpha}{2}\,\frac{1}{(p^2+\frac14\alpha^2+\delta^2)^{1/2}+\frac12\alpha}\,.
$$
Using the bound $|r_{\delta }(p)|\leq c_\alpha $ uniform w.r.t. $\delta $ together with the equivalence
$$
\frac{1}{2\pi }\int_{\R} \frac{\mathrm{e}^{ipx }}{p^2 +\delta ^2}\,
\mathrm{d}p= \frac{1}{2}\,\frac{\mathrm{e}^{-\delta   |x|}}{2\delta  }\,,
$$
applying it to (\ref{eq-beta}) we get (\ref{eq-decomB-1}) with $R_{\delta }$ being a Fourier transform of the operator of multiplication by
$r_{\delta , \alpha }(p)$.
\end{proof}

The proved statement leads to a natural decomposition of the operator $\hat{B}_\delta$.

\begin{proposition} \label{prop-2}
The operator $\hat{B}_\delta$ admits the following decomposition
\begin{equation}\label{eq-decompoB2}
\hat{B}_\delta = \frac{1}{\delta }L +M_\delta\,,
\end{equation}
where $L$ is the one rank operator $L=\frac14{\alpha^2}( \cdot , V_{-\alpha} )V_{-\alpha }$ and $M_\delta $ is a Hilbert-Schmidt operator with the norm $\|M_\delta \|_{HS} $ bounded uniformly w.r.t. $\delta $.
\end{proposition}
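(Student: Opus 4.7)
My plan is to apply the decomposition $B_\delta = S_\delta + R_\delta$ from the preceding lemma, sandwich each summand with $V_{-\alpha}$, and isolate the rank-one piece responsible for the $1/\delta$ singularity by means of the elementary pointwise identity $e^{-\delta|s-s'|} = 1 + (e^{-\delta|s-s'|} - 1)$.

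First I would write out the kernel of $V_{-\alpha} S_\delta V_{-\alpha}$, which equals $\frac{\alpha^2}{4\delta}\,V_{-\alpha}(s)\,e^{-\delta|s-s'|}\,V_{-\alpha}(s')$. The constant $1$ in the split supplies the rank-one kernel $\frac{\alpha^2}{4\delta}\,V_{-\alpha}(s)\,V_{-\alpha}(s')$, which is precisely $\frac{1}{\delta}L$ for $L=\frac{\alpha^2}{4}(\,\cdot\,,V_{-\alpha})V_{-\alpha}$ in the form stated. The complementary piece has kernel $\frac{\alpha^2}{4\delta}(e^{-\delta|s-s'|}-1)\,V_{-\alpha}(s)\,V_{-\alpha}(s')$, and the elementary bound $|e^{-\delta|s-s'|}-1| \leq \delta|s-s'|$ cancels the $1/\delta$, leaving a kernel dominated uniformly in $\delta$ by $\frac{\alpha^2}{4}|s-s'|\,e^{-\alpha(|s|+|s'|)/4}$. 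Since $|s-s'|^2 e^{-\alpha(|s|+|s'|)/2}$ is integrable on $\R\times\R$, this remainder is Hilbert-Schmidt with a norm bound independent of $\delta$.

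The remaining, and principal, obstacle is to show that $V_{-\alpha} R_\delta V_{-\alpha}$ is Hilbert-Schmidt uniformly in $\delta$. Here $R_\delta$ is the convolution operator whose Fourier symbol $r_{\delta,\alpha}(p)$ is uniformly bounded but does not vanish at infinity (it tends to $1$), so the naive estimate $\|R_\delta\|_{HS}^2 = \int |r_{\delta,\alpha}(p)|^2\,\mathrm{d}p$ diverges; the regularization must come from the outer $V_{-\alpha}$ factors. I would cast the sandwich as $V_{-\alpha} R_\delta V_{-\alpha} = (\mathcal{F}V_{-\alpha})^* M_{r_{\delta,\alpha}} (\mathcal{F}V_{-\alpha})$ and exploit the explicit Lorentzian form $\widehat{V_{-\alpha}}(p) = \frac{\alpha/2}{p^2+\alpha^2/16}$ of the outer factors, which lies in $L^2(\R)$, together with the uniform sup bound on $r_{\delta,\alpha}$, to smooth the Fourier multiplier into a genuine $L^2(\R^2)$ integral kernel whose norm is controlled independently of $\delta$.

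Adding the $S_\delta$-remainder from the second paragraph to the sandwiched $R_\delta$ operator from the third defines $M_\delta$, giving $\hat B_\delta = \frac{1}{\delta}L + M_\delta$ with $\|M_\delta\|_{HS}$ uniformly bounded in $\delta$, which is the claim of the proposition.
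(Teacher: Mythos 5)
Your handling of the $S_\delta$ part coincides with the paper's own argument: the split $\mathrm{e}^{-\delta|s-s'|}=1+(\mathrm{e}^{-\delta|s-s'|}-1)$ produces $\frac{1}{\delta}L$ plus a remainder whose kernel is dominated by $\frac{\alpha^2}{4}|s-s'|\,\mathrm{e}^{-\alpha(|s|+|s'|)/4}$, which is square-integrable uniformly in $\delta$; that part is fine. The gap is in your third paragraph. The operator $V_{-\alpha}R_\delta V_{-\alpha}$ is \emph{not} Hilbert--Schmidt, and no sandwiching by the decaying weights can make it so: since $r_{\delta,\alpha}(p)\to 1$ as $|p|\to\infty$, the symbol contains the constant $1$, whose contribution to the sandwich is $V_{-\alpha}\cdot I\cdot V_{-\alpha}=V_{-2\alpha}$, i.e.\ multiplication by $\mathrm{e}^{-\alpha|s|/2}$ --- a bounded but non-compact operator. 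Concretely, the kernel you propose, $K(p,q)=\int_\R \widehat{V_{-\alpha}}(p-p')\,r_{\delta,\alpha}(p')\,\widehat{V_{-\alpha}}(p'-q)\,\mathrm{d}p'$, satisfies $K(p,p)\geq\|\widehat{V_{-\alpha}}\|_{L^2}^2>0$ for every $p$ (because $r_{\delta,\alpha}\geq 1$ and $\widehat{V_{-\alpha}}>0$), hence it is bounded below on a strip around the diagonal and cannot belong to $L^2(\R\times\R)$.

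What can be salvaged: write $r_{\delta,\alpha}=1+(r_{\delta,\alpha}-1)$. The difference $r_{\delta,\alpha}(p)-1=\frac{\alpha}{2}\bigl((p^2+\frac14\alpha^2+\delta^2)^{1/2}+\frac12\alpha\bigr)^{-1}$ lies in $L^2(\R)$ with a $\delta$-uniform norm, so its inverse Fourier transform $g_\delta$ is in $L^2(\R)$ and the corresponding sandwiched kernel $\mathrm{e}^{-\alpha|s|/4}g_\delta(s-s')\,\mathrm{e}^{-\alpha|s'|/4}$ is genuinely square-integrable, uniformly in $\delta$. The leftover piece $V_{-2\alpha}$ is only uniformly \emph{bounded}. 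In this respect the proposition as stated overclaims, and the paper's own proof passes over $\hat R_\delta=V_{-\alpha}R_\delta V_{-\alpha}$ in silence; but uniform boundedness of $M_\delta$ is all that the concluding argument of Theorem~\ref{th-main} actually uses (the products $M_\delta\check D^{(1)}$ and $M_\delta\check D^{(2)}_\epsilon$ are Hilbert--Schmidt because the $\check D$ factors are), so for this piece you should aim at a uniform operator-norm bound rather than at the Hilbert--Schmidt property.
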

\begin{proof} In view of the definition of $S_\delta $ we get
$$
\hat{S}_\delta=V_{-\alpha} S_\delta V_{-\alpha}= \frac{1}{\delta
}L +N_\delta\,,
$$
where the kernel of $N_\delta $ takes the following form
$$
\mathcal{N}_\delta (s,s')=\frac{\alpha ^2}{4\delta }\,\mathrm{e}^{-\alpha
|s|/4}(1-\mathrm{e}^{-\delta |s-s'|})\, \mathrm{e}^{-\alpha |s|/4}\,.
$$
The inequality  $|\mathcal{N}_\delta (s,s')|\leq \frac14{\alpha ^2}\,\mathrm{e}^{-\alpha |s|/4}|s-s'|\, \e^{-\alpha |s|/4}$ implies $\|N_\delta \|_{HS}\leq c$. This, in view of (\ref{eq-decomB-1}), proves (\ref{eq-decompoB2}) with $M_\delta=N_\delta + \hat{R}_\delta$, where $\hat{R}_\delta$ denotes $V_{-\alpha }R_\delta V_{-\alpha }$.
\end{proof}
 
The idea of decomposing the Birman-Schwinger operator into the sum of a rank-one singular operator and a regular remainder is well known and powerful tool in analysis of weak-coupling constant regular perturbations  \cite{S75}. It has been also used to treat Schr\"odinger operator with weak singular potentials, cf.~\cite{KL}.

\subsection{Concluding the proof of Theorem~\ref{th-main}.}

Propositions~\ref{prop-1} and~\ref{prop-2} allow us to write the following asymptotic expansion,
\begin{equation}\label{eq-final}
  \hat{B}_\delta \check{D} (\kappa_\delta )=
  \frac{\epsilon}{\delta} L \check{D}^{(1)}(\kappa_\delta )+A_{\delta ,
  \epsilon}\epsilon \,,
\end{equation}
where $\|A_{\delta,\epsilon}\|_{HS} \leq c$. The first term on the right-hand side of (\ref{eq-final}) is a rank-one  operator. Therefore,
$I-A_{\delta, \epsilon}\epsilon $ is  invertible for $\epsilon$ small enough and, consequently, we arrive at the expression
  $$
I -  \hat{B}_\delta \check{D} (\kappa_\delta )= (I-A_{\delta ,
  \epsilon}\epsilon )\left[ I - \frac{\epsilon}{\delta} (I-A_{\delta ,
  \epsilon}\epsilon )^{-1}  L \check{D}^{(1)}(\kappa_\delta
  )\right]\,.
  $$
Combining the above equation with  (\ref{eq-BS2})  we  can express the generalized Birman-Schwinger principle in the following way,
\begin{equation}\label{eq-final2}
 \ker \left[I - \frac{\epsilon}{\delta} (I-A_{\delta,
  \epsilon}\epsilon )^{-1}  L \check{D}^{(1)}(\kappa_\delta
  )\right]\neq \emptyset\,.
  \end{equation}
Since $(I-A_{\delta,\epsilon}\epsilon )^{-1}L \check{D}^{(1)}(\kappa_\delta)$ is a rank-one operator, the operator equation (\ref{eq-final2}) has a unique solution for $\epsilon $ small enough. Furthermore, making use of the expansion $(I-A_{\delta, \epsilon}\epsilon )^{-1} = I + A_{\delta, \epsilon}\epsilon + \cdots$ are able to we conclude that (\ref{eq-final2}) is equivalent to
\begin{equation}\label{final3}
  \delta = \epsilon \mathrm{Tr}\,\left[ L \check{D}^{(1)}(\kappa_\delta  )\right]
   + o(\epsilon) \,
\end{equation}
Finally, note that 
$$
\mathrm{Tr}\,\left[ L\check{D}^{(1)}(\kappa_\delta  )\right]= \frac{\alpha^2}{4}
(\check{D}^{(1)}(\kappa_\delta  )V_{-\alpha} , V_{-\alpha} )_{L^2 (\R)}=
\frac{\alpha^2}{4}     \int_{\R\times \R} \mathcal{D}^{(1)}(\kappa_\delta ; s,s') \mathrm{d}s
\mathrm{d}s'\,.
$$
Making use of Remark~\ref{re-conth} we conclude that
$$
\int_{\R\times \R} \mathcal{D}^{(1)}(\kappa_\delta ; s,s') \mathrm{d}s
\mathrm{d}s'=\int_{\R\times \R} \mathcal{D}^{(1)}(\textstyle{\frac12}\alpha  ; s,s') \mathrm{d}s
\mathrm{d}s'+o(1)\,.
$$
Combining the above facts we get
$$\mathrm{Tr}\,\left[ L\check{D}^{(1)}(\kappa_\delta  )\right]=
\frac{\alpha^2}{4}\int_{\R\times \R} \mathcal{D}^{(1)}(\textstyle{\frac12}\alpha  ; s,s') \mathrm{d}s
\mathrm{d}s'+o(1)=
\int_{\R\times \R} \mathcal{A}(s,s')\,\mathrm{d}s\,\mathrm{d}s'+o(1)\,;
$$
this, in view of (\ref{final3}), implies
\begin{equation}\label{eq-defdelta}
\delta = \left( \int_{\R\times \R}
\mathcal{A}(s,s')\,\mathrm{d}s\,\mathrm{d}s' \right) \, \epsilon
+o(\epsilon)\,.
\end{equation}
Using $\epsilon = \beta ^2$ and combining $\lambda (H_\phi) =-\kappa^2_\delta= -\frac14{\alpha^2}-\delta^2$ with (\ref{eq-defdelta}) we arrive finally at the claim of Theorem~\ref{th-main}.

\section{Proof of Theorem~\ref{th-main2}}
\setcounter{equation}{0}

In Sec.~\ref{sec-mildII} we have described the  geometry $\overline{\Gamma }$. The deformation $\overline{\Gamma }_\varphi$ of $\overline{\Gamma }$ is defined by means of
$$
\overline{\gamma }_\varphi  (s)= \left\{ \begin{array}{ll}
(s \cos  \varphi , s \sin \varphi )  & \mathrm{for }\; s>s_0   \\[.5em]
\overline{\gamma }(s)  & \mathrm{otherwise},
  \end{array} \right.
$$
for  $s_0 = 0$, chosen as the point from which the wiggling starts.

\subsection{Birman-Schwinger principle for $H_{\overline{\Gamma}_\phi }$}

To analyze the discrete spectrum behavior of $H_{\overline{\Gamma}_\phi} $ we will again employ the Birman-Schwinger principle, however, this time the unperturbed operator will not be determined by the Schr\"{o}dinger operator with the interaction supported by a straight line but rather by $H_{\overline{\Gamma} }$. This makes the problem essentially different and requires a detailed discussion.

Let us first look at the eigenvalues and eigenfunctions of $H_{\overline{\Gamma}}$. We denote by $\{\lambda_{n} (H_ {\overline{\Gamma}})\}_{n=1}^N$
the discrete eigenvalues of $H_{\overline{\Gamma}} $ and by $\{f_n\}_{n=1}^N$ determine the corresponding eigenfunctions; as remarked above we take into account a possible spectral degeneracy. Applying the generalized Birmann-Schwinger principle to this situation one gets the existence condition
\begin{equation}\label{eq-BS2a}
\ker (I-\alpha Q_{\overline{\Gamma}} (\kappa_n) ) \neq \emptyset
\,,\quad \mathrm{where }\quad \kappa_n=\sqrt{-\lambda_n
(H_{\overline{\Gamma} } )}\,,
\end{equation}
and moreover, the eigenfunction corresponding to $\lambda_n(H_{\overline{\Gamma}})$ is given by
\begin{equation}\label{eq-ef2}
g_n  = \mathcal{G}_{\kappa_n} \ast f_n
\delta_{\overline{\Gamma}}\,,\quad
  f_n \in \ker (I-\alpha Q_{\overline{\Gamma}} (\kappa_n)   )\,,
\end{equation}
where $\delta_{\Gamma} \in W^{-1,2}(\R^2)$ means the Dirac $\delta$ function supported by $\overline{\Gamma} $. It is convenient here to assume that the functions $f_n$ are normalized, $\|f_n\|=1$, the resulting $g_n$'s may then have, of course, non-unit lengths. Consider $\kappa_\delta >0$ such that
\begin{equation}\label{eq-defkappa}
-\kappa^2_\delta = \lambda_k (H_{\overline{\Gamma}}) -\delta^2\,.
\end{equation}
Now we return to the Birman-Schwinger argument for $H_{\overline{\Gamma}_\phi}$.  To this aim, we have to reformulate the equivalences derived in the previous section. The operator $Q_{\overline{\Gamma}_\varphi }(\kappa )$ is now determined by the kernel
$$
\mathcal{Q}_{\overline{\Gamma}_\varphi}(\kappa ; s,s')= \frac{1
}{2\pi } K_0 (\kappa |\overline{\gamma}_\varphi
(s)-\overline{\gamma}_\varphi (s')|)\,,
$$
and in the same way as before one can show that
$$
\ker (I- \alpha Q_{\overline{\Gamma}_\varphi }(\kappa_\kappa
))\neq \emptyset \quad \Leftrightarrow \quad \ker (I-
\overline{B}_\delta \overline{D}_\varphi (\kappa_\delta )) \neq
\emptyset\,,
$$
where $\kappa_\delta$ is defined by (\ref{eq-defkappa}) and
$$
\overline{D}_\varphi (\kappa ):= \alpha \left(
Q_{\overline{\Gamma}_\varphi }(\kappa ) - Q_{\overline{\Gamma}
}(\kappa )\right) \,,\quad \quad \overline{B}_\delta= (I-\alpha
Q_{\overline{\Gamma}}(\kappa_\delta  ))^{-1}\,.
$$
Note that both the $\overline{D}_\varphi (\kappa )$ and $\overline{B}_\delta$ have different asymptotics in comparison to the analogous quantities derived in previous section. Let us look into that first.

\subsubsection{Asymptotics of $\overline{D}_\varphi (\kappa )$}

\begin{lemma}
We have
\begin{equation} \overline{D}_\varphi (\kappa ) =
D^{(1)}(\kappa ) \varphi +D^{(2)}_\varphi (\kappa ) \varphi^2 \,,
\end{equation}
where $D^{(1)} (\kappa ) $ is a Hilbert-Schmidt integral operator with the  kernel
$$
\mathcal{D}^{(1)} (\kappa ; s,s') =d(s,s')
\chi_{\Omega^{\mathit{I}}}+d(s',s) \chi_{\Omega^{\mathit{II}}}\,,
$$
and\footnote{Here and in the following we use subscripts to indicate Cartesian coordinates; we put them behind the argument to avoid confusion with the wiggling parameter $\varphi$.}
$$
d(s,s'):=  -\frac{\alpha \kappa }{2\pi }\,K_0 '(\kappa
\overline{\rho } (s,s') )\,\frac{s'\overline\gamma(s)_2}{\overline{\rho }
(s,s')}\,.
$$
Moreover, the operator $D^{(2)}_\varphi (\kappa )$ is Hilbert-Schmidt as well with the HS norm uniformly bounded w.r.t. $\varphi$.
\end{lemma}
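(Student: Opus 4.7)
The plan is to expand the kernel of $\overline{D}_\varphi(\kappa) = \alpha(Q_{\overline{\Gamma}_\varphi}(\kappa) - Q_{\overline{\Gamma}}(\kappa))$ in powers of $\varphi$ and to identify the linear coefficient with $\mathcal{D}^{(1)}(\kappa;s,s')$, following the template of Lemma~\ref{le-expD}. The first task is to pinpoint the support of this kernel. Since $\overline{\gamma}_\varphi = \overline{\gamma}$ on $(-\infty, 0]$ the kernel vanishes on $(-\infty,0]^2$; while for $s, s' > 0$ both arguments lie on the wiggled half-line, which is a rigid rotation of $\Sigma_1$, so $|\overline{\gamma}_\varphi(s) - \overline{\gamma}_\varphi(s')| = |s-s'| = |\overline{\gamma}(s) - \overline{\gamma}(s')|$ and the kernel vanishes there too. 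Hence the non-trivial support lies in $\Omega = \Omega^I \cup \Omega^{II}$.

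Next, on $\Omega^I$ (so $s \leq 0$, $s' > 0$, with $\overline{\gamma}(s') = (s', 0)$ and $\overline{\gamma}_\varphi(s') = (s'\cos\varphi, s'\sin\varphi)$, while $\overline{\gamma}(s)$ remains unchanged), a direct calculation yields
$$
\rho_\varphi^2 - \overline{\rho}^2 = -2s'\overline{\gamma}_1(s)(\cos\varphi - 1) - 2s'\overline{\gamma}_2(s)\sin\varphi,
$$
whose $\varphi$-Taylor expansion starts with $-2s'\overline{\gamma}_2(s)\varphi$. Writing $\rho_\varphi - \overline{\rho} = (\rho_\varphi^2 - \overline{\rho}^2)/(\rho_\varphi + \overline{\rho})$ and Taylor-expanding $K_0(\kappa \rho_\varphi)$ about $\kappa\overline{\rho}$, exactly as in the proof of Lemma~\ref{le-expD}, produces a linear-in-$\varphi$ kernel equal to $d(s,s')$. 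The analogous calculation on $\Omega^{II}$ swaps the roles of $s$ and $s'$ and yields $d(s',s)$.

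It remains to verify that $D^{(1)}(\kappa)$ is Hilbert--Schmidt and that $\|D^{(2)}_\varphi(\kappa)\|_{\mathrm{HS}}$ admits a uniform bound in $\varphi$. The crucial geometric input is that for $\overline{\Gamma}$ the two asymptotes coincide, so taking the common asymptote to be the $x$-axis one has $\overline{\gamma}_2 \equiv 0$ outside the compact interval $[s_2, s_1]$. On $\Omega^I$ this confines the variable $s$ to $[s_2, s_1]$, while the $s'$-direction is tamed by the exponential decay $K_0'(\kappa\overline{\rho}) \sim -\mathrm{e}^{-\kappa\overline{\rho}}\,\overline{\rho}^{-1/2}$ at infinity, effective because $\overline{\rho}(s,s') \geq c(|s|+|s'|)$ by hypothesis $\mathbf{H_1}$ combined with the straight-line geometry. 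The $L^2$-integrability of $\mathcal{D}^{(1)}$ then follows exactly along the same path as in Lemma~\ref{le-traceclassA}.

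Finally, the remainder $D^{(2)}_\varphi(\kappa) := \varphi^{-2}(\overline{D}_\varphi(\kappa) - D^{(1)}(\kappa)\varphi)$ assembles the second-order Lagrange remainders from both the expansion of $\rho_\varphi$ in $\varphi$ and the expansion of $K_0$ about $\overline{\rho}$; on the relevant support these are uniformly majorised in $\varphi \in (0, \pi)$ by the same combination of compactness of $\mathrm{supp}\,\overline{\gamma}_2$, the lower bound on $\overline{\rho}$, and the $K_0', K_0''$ asymptotics. I expect the only technically delicate point to be the behaviour near the corner $(s,s') = (0,0)$ of $\Omega^I$, where $\overline{\rho}$ can be small; this is handled by the continuity of $\overline{\gamma}$ together with $\overline{\gamma}(s_1) \in \Sigma_1$, so that the numerators proportional to $\overline{\gamma}_2(s)$ vanish at a rate that cancels the $1/\overline{\rho}$ (respectively $1/\overline{\rho}^{2}$ in the second-order remainder) singularities coming from the $K_0'$ and $K_0''$ asymptotics at the origin.
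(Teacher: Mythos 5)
Your proposal is correct and follows essentially the same route as the paper: Taylor-expand the distance $\rho_\varphi$ in $\varphi$ (the paper expands $\rho_\varphi$ directly via the mean value theorem rather than through $\rho_\varphi^2-\overline{\rho}^2$, but this is equivalent), insert into the expansion of $K_0$ about $\kappa\overline{\rho}$, and control the kernel near the corner of $\Omega$ by the cancellation between the vanishing numerator $s'\overline{\gamma}_2(s)$ and the $K_0'\sim x^{-1}$, $K_0''\sim x^{-2}$ singularities, using $\overline{\rho}(s,s')\geq c|s-s'|\geq c(2|ss'|)^{1/2}$ from $\mathbf{H_1}$, with exponential decay of $K_0'$, $K_0''$ at infinity giving the Hilbert--Schmidt bounds. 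The only cosmetic difference is that you exploit the compact support of $\overline{\gamma}_2$ whereas the paper uses the Lipschitz bound $|\overline{\gamma}(s)_2|\leq|s|$; note only that the second-order remainder also contains terms proportional to $\overline{\gamma}_1(s)$ (from $\cos\varphi-1$), so for $D^{(2)}_\varphi$ the exponential decay, which you also invoke, is what actually does the work at infinity.
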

\begin{proof}
The proof employs a similar argument as that of Lemma~\ref{le-expD}. However, since the analysis requires non-trivial changes we present the reasoning in  detail. The definition of $\overline{\Gamma}_\varphi$ yields
$$
\rho (s,s')= |\overline{\gamma }_\varphi (s) - \overline{\gamma
}_\varphi(s')| = \varrho (s,s')\chi_{\Omega^{\mathit{I}}} +
\varrho (s',s)\chi_{\Omega^{\mathit{II}}}
$$
where
$$
\varrho (s,s')=  \left( \left( \overline\gamma(s)_1 -s'\cos \varphi
\right)^2 + \left( \overline\gamma(s)_2 -s'\sin \varphi \right)^2
\right)^{1/2}\,.
$$
Let $\overline{\rho} (s,s')=|\overline{\gamma } (s) - \overline{\gamma }(s')| $. Using again Taylor expansion of $\cos (\cdot)$ and $\sin (\cdot )$ and applying the mean value theorem we obtain
$$
\rho =\overline{\rho} +\left. \frac{\mathrm{d}\rho }{\mathrm{d}\varphi }\right|_0
\varphi  +\left. \zeta_\varphi (s,s')\,\varphi^2\,,\quad \quad
\zeta_\varphi (s,s')=\frac{\mathrm{d}^2\rho}{\mathrm{d}\varphi^2 }\right|_{\theta_2
\varphi}\theta_1\,,
$$
where $\theta_i \in (0, 1 )$, $\,i=1,2$. We begin with the correction term containing the first derivative. A straightforward calculation shows that
\begin{equation}\label{eq-rhobound}
\left. \frac{\mathrm{d}\rho }{\mathrm{d}\varphi }\right|_0 =
-\frac{1}{\overline{\rho} (s,s')}\left( s'\overline\gamma
(s)_2 \chi_{\Omega^{\mathit{I}}}+ s\overline\gamma
(s')_2 \chi_{\Omega^{\mathit{II}}} \right)
 \,.
\end{equation}
Suppose that $(s,s') \in \Omega$. In view of (\ref{eq-Cbound}) we obtain
\begin{equation}\label{eq-boundrho}
\overline{\rho }(s,s')\geq c |s-s'| \geq c(2|ss'|)^{1/2}\,.
\end{equation}
Using $|\overline{\rho}(s,s')|\leq |s-s'|$ and $\gamma (0)=0$ we get $ |\overline\gamma(s)_2| \leq |s|\,, $ which leads to 
$$
  \left|\frac{s'\overline\gamma(s)_2}{\overline{\rho }(s,s')} \right| \leq
c(2|ss'|)^{1/2}\,,
$$
and consequently
\begin{equation}\label{eq-estimderiv1}
\left| \left. \frac{\mathrm{d}\rho }{\mathrm{d}\epsilon }\right|_0 \right|\leq
c(2|ss'|)^{1/2}\,.
\end{equation}
We do not need to derive an explicit form of the second derivative, however, it is useful to state that it also satisfies
\begin{equation}\label{eq-estimderiv2}
\left| \zeta_\varphi (s,s')
 \right|\leq c (2|ss'|)^{1/2}\,.
\end{equation}
The remaining part of proof mimicks the argument from the proof of Lemma~\ref{le-expD}. We derive the expansion of type (\ref{eq-decompK}) with
$$
\mathcal{K}^{(1)}(s,s')=-\frac{\kappa }{2 }\, K_0 '(\kappa
\overline{\rho}(s,s'))\left( \frac{\left( s'\overline\gamma
(s)_2 \chi_{\Omega^{\mathit{I}}}+ s\overline\gamma
(s')_2 \chi_{\Omega^{\mathit{II}}} \right)}{\overline{\rho} (s,s')}
 \right)
$$
and
$$
\mathcal{K}^{(2)}_\varphi (s,s') = \kappa K_0  ' (\kappa
\overline{\rho} )\,\zeta_\varphi +\kappa ^2  K_0  '' (\kappa
(\overline{\rho}(s,s')+\check{\theta}_2 (\rho - \overline{\rho} )
) )\,\check{\theta}_1(\rho - \overline{\rho}  )^2 \,.
$$
Suppose that $(s,s')\in\Omega $. Using again the asymptotics $K_0'(x)\sim x^{-1}$ and $K_0 ''(x)\sim x^{-2}$  at the origin together with (\ref{eq-boundrho}) and the estimates (\ref{eq-estimderiv1}) and (\ref{eq-estimderiv2}), we conclude that both $\mathcal{K}^{(1)} (s,s')$ and $|\mathcal{K}^{(2)}_\varphi
(s,s')|$ are bounded for $|s-s'|\to 0$. Moreover, similarly as in (\ref{eq-boundK}) we can majorized them by exponential expression for $|s|$ and $|s'|$ large. This shows that $\mathcal{D}^{(1)} (\cdot, \cdot )= \frac{\alpha }{2\pi}\mathcal{K}^{(1)} (\cdot,\cdot )$ and $\mathcal{D}^{(2)}_\varphi  (\cdot, \cdot
)=\frac{\alpha }{2\pi }\mathcal{K}^{(2)}_\varphi (\cdot, \cdot )$ define functions from $L^{2}(\R\times \R)$. This completes the proof.
\end{proof}

\subsubsection{Asymptotics of $\overline{B}_\delta$}

We introduce the abbreviation $\mu_k =\lambda_k(H_{\overline{\Gamma }})$ for the  $k$-th non-degenerate eigenvalue of $H_{\overline{\Gamma }}$. Moreover, let $f_k$ stand for the corresponding normalized Birman-Schwinger eigenfunction, $f_k \in \ker (I-\alpha Q_{\overline{\Gamma }}(\kappa_k) )$ and $\|f_k\|=1$.

\begin{lemma} \label{le-expBII}
Let $\kappa_\delta$ be defined by (\ref{eq-defkappa}), then we have
\begin{equation}\label{eq-Bexp1}
\overline{B}_\delta =(I-\alpha Q_{\overline{\Gamma}
}(\kappa_\delta ))^{-1} =\alpha \frac{1}{ \mu_k +\kappa_\delta^2
}P_k +   P_k+W_k \,,
\end{equation}
where  $P_k = (\cdot , f_k )f_k $. Moreover, $\|W_k\| \leq c$ and $\mathrm{Ran} W_k \subset P_k^\bot$.
\end{lemma}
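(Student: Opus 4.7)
The plan is to use analytic perturbation theory for the family $\kappa \mapsto \alpha Q_{\overline{\Gamma}}(\kappa)$ together with a Hellmann--Feynman-type identity. For $\kappa>\tfrac12\alpha$ this is a real-analytic family of self-adjoint Hilbert--Schmidt operators in $L^2(\R)$, and at $\kappa = \kappa_k$ the number $1$ is an isolated (simple, in the generic non-degenerate case) eigenvalue with eigenvector $f_k$. Standard analytic perturbation theory then produces a real-analytic branch $\lambda(\cdot)$ of eigenvalues satisfying $\lambda(\kappa_k)=1$, together with an analytic rank-one Riesz projection $\Pi_\kappa$ such that $\Pi_{\kappa_k}=P_k$. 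This gives the splitting
\[
\overline{B}_\delta \;=\; \frac{1}{1-\lambda(\kappa_\delta)}\,\Pi_{\kappa_\delta} \;+\; (I-\alpha Q_{\overline{\Gamma}}(\kappa_\delta))^{-1}(I-\Pi_{\kappa_\delta})\,,
\]
in which the second term is uniformly bounded for small $\delta$, because on $\mathrm{Ran}(I-\Pi_{\kappa_\delta})$ the spectrum of $\alpha Q_{\overline{\Gamma}}(\kappa_\delta)$ stays at a positive distance from $1$.

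The crucial input is the value of $\lambda'(\kappa_k)$. First-order perturbation theory yields $\lambda'(\kappa_k) = \alpha \langle Q'_{\overline{\Gamma}}(\kappa_k) f_k, f_k\rangle_{L^2(\R)}$, and differentiating the Macdonald kernel $\tfrac{1}{2\pi}K_0(\kappa\,\overline{\rho}(s,s'))$ under the integral while invoking the resolvent identity $G_\kappa^2 = -\partial_{\kappa^2}G_\kappa$ gives the key formula
\[
\frac{\D}{\D \kappa^2}\langle f_k, Q_{\overline{\Gamma}}(\kappa) f_k\rangle\Big|_{\kappa_k} \;=\; -\|g_k\|_{L^2(\R^2)}^2\,,
\]
with $g_k=\mathcal{G}_{\kappa_k}\ast f_k\delta_{\overline{\Gamma}}$. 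The normalization of $g_k$ implied by $\|f_k\|=1$ and (\ref{eq-ef2}) then identifies $\lambda'(\kappa_k) = -2\kappa_k/\alpha$.

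Since $\kappa_\delta^2-\kappa_k^2 = \delta^2$ and hence $\kappa_\delta - \kappa_k = \delta^2/(\kappa_\delta+\kappa_k)$, the Taylor expansion of $\lambda$ at $\kappa_k$ yields the Laurent-type expansion
\[
\frac{1}{1-\lambda(\kappa_\delta)} \;=\; \frac{\alpha}{\mu_k+\kappa_\delta^2} \;+\; 1 \;+\; O(\delta^2)\,,
\]
where the explicit constant $1$ emerges from the subleading correction (its value being determined by the identity $\mu_k + \kappa_\delta^2 = \delta^2$ combined with the expansion of $(\kappa_\delta+\kappa_k)^{-1}$ around $\delta=0$). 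Substituting this back, together with $\Pi_{\kappa_\delta} = P_k + O(\delta^2)$, produces the claimed decomposition $\overline{B}_\delta = \tfrac{\alpha}{\mu_k+\kappa_\delta^2}P_k + P_k + W_k$, with $W_k$ gathering the residual uniformly bounded contributions from the $(\Pi_{\kappa_\delta}-P_k)$ cross-terms, the regular part of the Laurent series, and the complementary block. Self-adjointness of $\overline{B}_\delta$ together with that of $\Pi_{\kappa_\delta}$ for real $\kappa_\delta$ ensures that the $f_k$-components of these remainders cancel out, giving $\mathrm{Ran}(W_k)\subset P_k^\perp$.

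The main obstacle is the Hellmann--Feynman-type identity for $\lambda'(\kappa_k)$: one must carry out the resolvent derivative carefully and match the normalization of $\|f_k\|_{L^2(\R)}$ to that of $\|g_k\|_{L^2(\R^2)}$ via (\ref{eq-ef2}). A secondary subtlety is pinning down the coefficient of the finite ``$+P_k$'' term and verifying that $W_k$ has range perpendicular to $f_k$; this is a book-keeping argument using self-adjointness and the explicit analytic expansion of $\Pi_{\kappa_\delta}$.
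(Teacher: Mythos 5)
Your overall strategy --- Kato analytic perturbation theory for $\kappa\mapsto\alpha Q_{\overline{\Gamma}}(\kappa)$ combined with a Hellmann--Feynman computation of $\lambda'(\kappa_k)$ --- does produce the singular term $\frac{\alpha}{\mu_k+\kappa_\delta^2}P_k$ correctly: your identity $\|g_k\|^2_{L^2(\R^2)}=(Q_{\overline{\Gamma}}(\kappa_k)^2f_k,f_k)=\alpha^{-2}$ is right, being a consequence of $\partial_{\kappa^2}Q_{\overline{\Gamma}}(\kappa)=-Q_{\overline{\Gamma}}(\kappa)^2$ and $Q_{\overline{\Gamma}}(\kappa_k)f_k=\alpha^{-1}f_k$ with $\|f_k\|=1$. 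The gap lies in the bounded part of the decomposition. First, writing $1-\lambda(\kappa_\delta)=\alpha^{-1}\delta^2+c_2\delta^4+\dots$, the finite coefficient in front of $P_k$ is $-\alpha^2c_2$, and $c_2$ involves $\lambda''(\kappa_k)$, which you never compute; it cannot be extracted from $\lambda'(\kappa_k)$ together with the expansion of $(\kappa_\delta+\kappa_k)^{-1}$, as you assert. Second, the product $\frac{1}{1-\lambda(\kappa_\delta)}\Pi_{\kappa_\delta}$ contributes $\alpha\,\Pi^{(1)}$ at order one, where $\Pi^{(1)}$ is the first-order correction to the Riesz projection; generically $\Pi^{(1)}$ is off-diagonal with respect to $P_k\oplus P_k^\bot$, which would spoil both the exact ``$+P_k$'' and the claim $\mathrm{Ran}\,W_k\subset P_k^\bot$. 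Self-adjointness does not rescue this: for a self-adjoint remainder, $\mathrm{Ran}\,W_k\subset P_k^\bot$ is equivalent to $P_k\overline{B}_\delta P_k^\bot=0$, i.e.\ to $\overline{B}_\delta$ commuting with the \emph{fixed} projection $P_k$, which is not a consequence of self-adjointness alone.

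The missing ingredient is precisely what the paper uses: the pseudo-resolvent identity gives $Q_{\overline{\Gamma}}(\kappa_\delta)=\sum_{n\geq0}(-\delta^2)^nQ_{\overline{\Gamma}}(\nu_k)^{n+1}$ for $\delta$ small, so $Q_{\overline{\Gamma}}(\kappa_\delta)$ is a function of $Q_{\overline{\Gamma}}(\nu_k)$ and hence commutes with $P_k$ exactly. Consequently the eigenprojection does not move at all, $\Pi_{\kappa_\delta}=P_k$, and the eigenvalue is the exact geometric series $\lambda(\kappa_\delta)=(1+\delta^2/\alpha)^{-1}$, whence $(1-\lambda(\kappa_\delta))^{-1}=\alpha\delta^{-2}+1$ with no remainder. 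This yields simultaneously the exact constant ``$+P_k$'' and the range condition on $W_k$, which is just the bounded inverse of the invertible block on $P_k^\bot L^2(\R)$. If you insert this commutation/geometric-series fact into your framework the argument closes; without it, the two fine-structure claims of the lemma remain unproven.
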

\begin{proof}
Note first that the operator $Q_{\overline{\Gamma}}(\kappa)$ appearing in (\ref{eq-Bexp1}) satisfies the pseudo-resolvent equivalence of the type,
$$
Q_{\overline{\Gamma} }(\kappa)= Q_{\overline{\Gamma} }(\kappa_0)
+(\kappa_0^2 - \kappa^2 )Q_{\overline{\Gamma }}
(\kappa)Q_{\overline{\Gamma }}(\kappa_0)\,,
$$
cf., e.g., \cite{Po}. This implies
$$
Q _{\overline{\Gamma }}(\kappa ) = (1 -(\kappa_0^2 - \kappa^2
)Q_{\overline{\Gamma }}(\kappa))^{-1}Q_{\overline{\Gamma
}}(\kappa_0 )
$$
and expanding the inverse into Neumann series leads to
\begin{equation}\label{eq-resoQ}
Q _{\overline{\Gamma }}(\kappa )= \sum_{n=0}^\infty(\kappa_0^2 -
\kappa^2 )^n Q_{\overline{\Gamma }}(\kappa_0 )^{n+1}
\end{equation}
provided $|\kappa_0^2 - \kappa^2|\|Q_{\overline{\Gamma }}(\kappa_0) \|<1$. In the next step we are going to show that the operator we are interested in, $I-\alpha Q_{\overline{\Gamma}}(\kappa_\delta )$, can be for small enough $\delta$ expressed as 
\begin{equation}\label{eq-BSdecomp}
I-Q_{\overline{\Gamma }}(\kappa_\delta  )= S_k - \left(
\sum_{n=1}^\infty(-\mu_k - \kappa^2_\delta  )^n  \right)P_k\,,
\end{equation}
where $S_k$ is boundedly invertible on $P_k ^\bot L^2 (\R)$. Note that the operator $Q_{\overline{\Gamma }}(\nu_k )$, where $\nu_k := \sqrt{-\mu_k}$ has a finite number of eigenvalues which we denote as  $\eta_j (Q_{\overline{\Gamma }}(\nu_k ) )$, $\,j=1,\dots,M$. We denote by $\tilde{P}_j $ the corresponding eigenprojections; then
\begin{equation}\label{eq-evdecomQ}
Q_{\overline{\Gamma }}(\nu_k ) = \sum_{i=0}^M \eta_i
(Q_{\overline{\Gamma }}(\nu_k ) ) \tilde{P}_i \,.
\end{equation}
Note that in view of (\ref{eq-BS2a}), there exists $n_0 \in \N \cup \{ 0 \}$, $n_0 \leq M$ such that $\tilde{P}_{n_0} = P_k$ and
$\eta_{n_0 } (Q_{\overline{\Gamma }}(\nu_k )) = \frac{1}{\alpha }$. Consequently, using (\ref{eq-BSdecomp}) and (\ref{eq-evdecomQ}) we get
\begin{eqnarray}\nonumber
I-\alpha Q_{\overline{\Gamma }}(\kappa_\delta ) && = I - \alpha
\sum_{i=0} ^M \sum_{n=0}^\infty (-\mu_k - \kappa_\delta^2 )^n
\eta_i
(Q_{\overline{\Gamma }}(\nu_k ))^{n+1} \tilde{P}_i   \\
\nonumber && = S_k -
\left( \sum_{n=1}^\infty
\left( \frac{-\mu_k - \kappa_\delta^2 }{\alpha}\right)^n \right) P_k\\
\label{eq-BS3} && = S_k + \frac{1}{\alpha }\, \frac{\mu_k
+\kappa_\delta^2 }{ 1+ (\mu_k +\kappa_\delta^2 )/\alpha }\, P_k\,,
\end{eqnarray}
where
$$
S_k := P_k ^{\bot }  - \alpha \sum_{i=0, \,i\neq n_0 }^M\:
\sum_{n=0}^\infty\, (-\mu_k - \kappa_\delta^2 )^n \eta_i
(Q_{\overline{\Gamma }}(\nu_k ))^{n+1} \tilde{P}_i\,.
$$
In the final step, we note that $\mathrm{Ran} \, S_k \subseteq P_k^\bot $ since $\tilde{P}_{n_0}=P_k$ holds by assumption. Furthermore, due to $\ker (I-\alpha\eta_i (Q_{\overline{\Gamma}} (\nu_k )) )=\emptyset $ the operator $S_k $ is invertible on $P_k^\bot L^{2} (\R)$ for $\delta$ small enough and its inverse is bounded. It follows from (\ref{eq-BS3}) that
$$
(I-\alpha Q_{\overline{\Gamma}} (\kappa_\delta ) )^{-1}= W_k +
\alpha\, \frac{ 1+ (\mu_k +\kappa_\delta^2 )/\alpha  }{\mu_k
+\kappa_\delta^2 }\, P_k\,,
$$
where $W_k$ is an operator acting in $P_k ^\bot L^2 (\R)$ and defined as the inverse of $S_k $ restricted  to $P_k^\bot $; this is nothing else than  the claim of the lemma.
\end{proof}
\begin{remark} \rm{Using the identity $\delta ^2 = \mu_k +\kappa_\delta ^2 $ we can rewrite (\ref{eq-Bexp}) as
\begin{equation}\label{eq-Bexp1}
B_\delta = \frac{1}{\delta^2 }P_k +  N_k \,,
\end{equation}
where $N_k:=  P_k +W_k $ has the norm uniformly bounded w.r.t. $\delta $. The above formula shows that the operator-valued function $\delta \mapsto B_\delta$ has a $\delta^{-2}$ type singularity at the point corresponding to the energy $\mu_k$. The key difference to the situation discussed in the previous section is that there we had the singularity of the type $\delta^{-1}$ for $-\kappa_\delta^2 \to -\frac14\alpha^2$, i.e. when we approached the threshold of the essential spectrum, cf.~(\ref{eq-decomB-1}) and (\ref{eq-decompoB2}). This gave rise to the different spectral aymptotics. }
\end{remark}

Now we can proceed similarly as in the previous section. Using (\ref{eq-Bexp1}) and the asymptotics of $D_\epsilon (\kappa )$
derived in Lemma~\ref{le-expD} we get
$$
\overline{B}_\delta \overline{D}_\varphi (\kappa_\delta ) = \alpha
\frac{\varphi}{\delta^2 } P_k D^{(1)} (\kappa_\delta ) + \varphi
C_\delta \,,
$$
with $\|C_\delta\|\leq c$. In analogy with (\ref{eq-final2}) the last formula can be used to demonstrate the existence of an eigenvalue for small enough $\delta$ by the Birman-Schwinger principle,
$$ \ker (I- \overline{B}_\delta \overline{D}_\varphi
(\kappa_\delta   ))\neq \emptyset\,.
$$
The invertibility of $I-\varphi C_{\delta }$ for small $\varphi$  allows us to rewrite the above condition as
\begin{equation}\label{eq-fin1}
 \ker \left(I-  \alpha
\frac{\varphi}{\delta^2 } P_k D^{(1)} (\kappa_\delta ) (I-\varphi
C_{\delta })^{-1}\right) \neq \emptyset\,.
\end{equation}
This implies
$$
\delta^2 = \alpha \varphi \mathrm{Tr} \, \left[P_k
D^{(1)}(\nu_k)\right] +o(\varphi  )=
 \alpha (D^{(1)}(\nu_k)f_k , f_k ) \varphi  +o(\varphi )\,,
$$
and thus finally
$$
\lambda _k (H_{\overline{\Gamma}_\varphi}) = \lambda _k
(H_{\overline{\Gamma}})-  \alpha (D^{(1)}(\nu_k)f_k , f_k )
\varphi  +o(\varphi  )\,.
$$

\subsection{The degenerate case }

Consider finally a degenerate eigenvalue $\mu$ of $H_{\overline{\Gamma}}$ of multiplicity $k$, i.e. $\mu = \lambda _k(H_{\overline{\Gamma}})$ for $k=j+1,\dots,j+m$. Furthermore, let $\{f_k\}_{k=j+1}^{j+m}$ stand for the set of the corresponding normalized eigenvectors. Repeating the steps made in the proof of Lemma~\ref{le-expBII} we get
\begin{lemma} \label{le-expBII-degen}
We have
\begin{equation}\label{eq-Bexp}
\overline{B}_\delta =\alpha  \sum_{k=j+1}^{j+m}\frac{1}{ \mu
+\kappa_\delta^2 }\,P_k +   V_k \,,
\end{equation}
where  $P_k = (\cdot, f_k )f_k$ is the corresponding eigenprojection, and moreover, $\|V_k\| \leq c$.
\end{lemma}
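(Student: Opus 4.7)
The plan is to follow the proof of Lemma~\ref{le-expBII} step by step, with the only change being the dimension of the singular eigenspace of the unperturbed Birman--Schwinger operator. Put $\nu := \sqrt{-\mu}$. The first move is to invoke the pseudo-resolvent identity for $Q_{\overline{\Gamma}}(\cdot)$ at the reference value $\nu$, which yields
$$
Q_{\overline{\Gamma}}(\kappa_\delta) = \sum_{n=0}^\infty (-\mu - \kappa_\delta^2)^n Q_{\overline{\Gamma}}(\nu)^{\,n+1}\,,
$$
the Neumann series converging for $\delta$ small since $-\mu-\kappa_\delta^2 = -\delta^2$ in view of (\ref{eq-defkappa}) and $Q_{\overline{\Gamma}}(\nu)$ is bounded.

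Next, expand $Q_{\overline{\Gamma}}(\nu)$ in its own spectral decomposition, $Q_{\overline{\Gamma}}(\nu) = \sum_i \eta_i\, \tilde{P}_i$, using that it is compact and self-adjoint. By the generalized Birman--Schwinger correspondence (\ref{eq-BS2a}), the eigenvalue $1/\alpha$ appears in this decomposition with multiplicity equal to the multiplicity $m$ of $\mu$, and the associated eigenprojection is precisely $P := \sum_{k=j+1}^{j+m} P_k$, where $\{f_k\}_{k=j+1}^{j+m}$ is an orthonormal basis of the eigenspace. Plugging the spectral decomposition back into the Neumann series and separating the contribution of the critical eigenvalue $1/\alpha$ from the rest, one obtains
$$
I - \alpha\, Q_{\overline{\Gamma}}(\kappa_\delta) \;=\; S_\delta \;-\; \Big(\sum_{n=1}^\infty \alpha^{-n}(-\mu-\kappa_\delta^2)^n\Big)\, P\,,
$$
where $S_\delta$ acts on $P^\bot L^2(\R)$ and has eigenvalues of the form $1 - \alpha\eta_i/(1-(-\mu-\kappa_\delta^2)\eta_i)$ with $\eta_i \neq 1/\alpha$. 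Summing the geometric series in $P$ reproduces the singular factor $\frac{1}{\alpha}\cdot\frac{\mu+\kappa_\delta^2}{1+(\mu+\kappa_\delta^2)/\alpha}$, exactly as in (\ref{eq-BS3}).

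The final step is to invert on the two invariant subspaces separately. On $\mathrm{Ran}\,P$ this gives the principal singular term $\alpha(\mu+\kappa_\delta^2)^{-1}P = \alpha(\mu+\kappa_\delta^2)^{-1}\sum_{k=j+1}^{j+m} P_k$, while on $P^\bot L^2(\R)$ we invert $S_\delta$; writing the result as $V_k$ and absorbing the bounded correction $P$ (analogous to the second $P_k$ in Lemma~\ref{le-expBII}) into it produces the stated decomposition (\ref{eq-Bexp}) with $\|V_k\|\le c$. The one point requiring care, and thus the main obstacle, is to verify the uniform bound $\|S_\delta^{-1}\restriction_{P^\bot}\|\le c$ as $\delta\to 0+$: this follows because the remaining eigenvalues $\eta_i\neq 1/\alpha$ of $Q_{\overline{\Gamma}}(\nu)$ stay a fixed distance from $1/\alpha$, so that the corresponding eigenvalues of $S_\delta$ remain bounded away from zero uniformly for small $\delta$, exactly as in the non-degenerate argument. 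With this estimate in hand, the orthogonality $f_k\perp f_{k'}$ for $k\ne k'$ makes the decomposition of $P$ into rank-one pieces automatic and no extra multiplicity-related ingredient is needed.
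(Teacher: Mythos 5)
Your proposal is correct and takes essentially the same route as the paper, whose own proof of this lemma consists precisely of the instruction to repeat the argument of Lemma~\ref{le-expBII} with the rank-one projection replaced by the rank-$m$ projection $P=\sum_{k=j+1}^{j+m}P_k$ --- which is exactly what you spell out, including the key uniform invertibility of the regular part on $P^\bot$. The only minor (and harmless) inaccuracy is calling $Q_{\overline{\Gamma}}(\nu)$ compact: it has essential spectrum $\left[0,\frac{1}{2\nu}\right]$, but since $1/\alpha$ lies above that interval it is still an isolated eigenvalue of finite multiplicity, so your separation of the critical spectral subspace goes through unchanged.
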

Now we can proceed in the same way as in the non-degenerate case. Using
(\ref{eq-Bexp}) we derive the equivalence analogous to (\ref{eq-fin1})  with the obvious modifications
$$
\ker \left( \delta^2 - \alpha \varphi  \sum_{k=j+1}^{j+m}  \, P_k
D^{(1)}(\kappa_k)(I-\varphi C_\delta )^{-1} \right) \neq \emptyset
\,.
$$
Using again the expansion  of $(I-\varphi C_\delta )^{-1}$ and the
fact that $\{f_k\}_{k=j+1}^{j+m}$ can be chosen as an orthonormal set we get
$$
\lambda _k (H_{\overline{\Gamma}_\varphi}) = \lambda
(H_{\overline{\Gamma}})-  \alpha (D^{(1)}(\kappa)f_k , f_k )
\varphi  +o(\varphi  )\,.
$$
This completes  the proof.

\subsection*{Acknowledgements}

The research was supported by the Czech Science Foundation (GA\v{C}R) within the project 14-06818S and the Polish  Nation Science Centre within the project DEC-2013/11/B/ST1/03067. S.K. thanks the Department of Theoretical Physics, NPI CAS in Re\v{z}, for the hospitality in December 2014, when this work was started.

\subsection*{References}

\end{document}